\newtheorem{theorem}{Theorem}
\newcommand{\sstitle}[1]{{\bf #1\/.}}
\theoremstyle{definition}
\newtheorem{mydef}{Definition}
\let\oldnl\nl% Store \nl in \oldnl
\newcommand{\nonl}{\renewcommand{\nl}{\let\nl\oldnl}}% Remove line number for one line
\newcommand{\Break}{\textbf{break}}
\newcommand\vldbdoi{XX.XX/XXX.XX}
\newcommand\vldbpages{XXX-XXX}
\newcommand\vldbvolume{14}
\newcommand\vldbissue{1}
\newcommand\vldbyear{2020}
\newcommand\vldbauthors{\authors}
\newcommand\vldbtitle{\shorttitle} 
\newcommand\vldbavailabilityurl{https://github.com/Spatio-Temporal-Lab/Falcon}
\newcommand\vldbpagestyle{plain} 
\begin{document}
\title{{\em Falcon}: GPU-Based Floating-point Adaptive Lossless Compression}

%%
%% The "author" command and its associated commands are used to define the authors and their affiliations.
\author{Zheng Li}
\author{Weiyan Wang}
\affiliation{%
  \institution{Chongqing University, China}
}
\email{zhengli@cqu.edu.cn}
\email{weiyan.wang@stu.cqu.edu.cn}

% \author{ZeChao Chen}
% \orcid{0000-0001-5109-3700}
% \affiliation{%
%   \institution{Inria Paris-Rocquencourt}
%   \city{Rocquencourt}
%   \country{France}
% }
% \email{vb@rocquencourt.com}

\author{Ruiyuan Li}
\author{Chao Chen}
\affiliation{%
  \institution{Chongqing University, China}
}
\email{ruiyuan.li@cqu.edu.cn}
\email{cschaochen@cqu.edu.cn}

\author{Xianlei Long}
\author{Linjiang Zheng}
\affiliation{%
  \institution{Chongqing University, China}
}
\email{xianlei.long@cqu.edu.cn}
\email{zlj\_cqu@cqu.edu.cn}

\author{Quanqing Xu}
\author{Chuanhui Yang}
\affiliation{%
  \institution{OceanBase, Ant Group, China}
}
\email{xuquanqing.xqq@oceanbase.com}
\email{rizhao.ych@oceanbase.com}

%%
%% The abstract is a short summary of the work to be presented in the
%% article.
\begin{abstract}
Domains such as IoT (Internet of Things) and HPC (High Performance Computing) generate a torrential influx of floating-point time-series data. Compressing these data while preserving their absolute fidelity is critical, and leveraging the massive parallelism of modern GPUs offers a path to unprecedented throughput. Nevertheless, designing such a high-performance GPU-based lossless compressor faces three key challenges: 1) heterogeneous data movement bottlenecks, 2) precision-preserving conversion complexity, and 3) anomaly-induced sparsity degradation.
To address these challenges, this paper proposes {\em Falcon}, a GPU-based \textbf{\underline{F}}loating-point \textbf{\underline{A}}daptive \textbf{\underline{L}}ossless \textbf{\underline{CO}}mpressio\textbf{\underline{N}} framework. Specifically, {\em Falcon} first introduces a lightweight asynchronous pipeline, which hides the I/O latency during the data transmission between the CPU and GPU. Then, we propose an accurate and fast float-to-integer transformation method with theoretical guarantees, which eliminates the errors caused by floating-point arithmetic. Moreover, we devise an adaptive sparse bit-plane lossless encoding strategy, which reduces the sparsity caused by outliers. Extensive experiments on 12 diverse datasets show that our compression ratio improves by \textcolor{black}{9.1\%} over the most advanced CPU-based method, with compression throughput \textcolor{black}{$2.43\times$} higher and decompression throughput \textcolor{black}{$2.4\times$} higher than the fastest GPU-based competitors, respectively. %Our work establishes Falcon as an exceptionally efficient and robust solution that decisively overcomes the fundamental barriers of GPU-based compression.

\end{abstract}

\maketitle

%%% do not modify the following VLDB block %%
%%% VLDB block start %%%
\pagestyle{\vldbpagestyle}
\begingroup\small\noindent\raggedright\textbf{PVLDB Reference Format:}\\
\vldbauthors. \vldbtitle. PVLDB, \vldbvolume(\vldbissue): \vldbpages, \vldbyear.\\
\href{https://doi.org/\vldbdoi}{doi:\vldbdoi}
\endgroup
\begingroup
\renewcommand\thefootnote{}\footnote{\noindent
*Corresponding author: Ruiyuan Li\\
This work is licensed under the Creative Commons BY-NC-ND 4.0 International License. Visit \url{https://creativecommons.org/licenses/by-nc-nd/4.0/} to view a copy of this license. For any use beyond those covered by this license, obtain permission by emailing \href{mailto:info@vldb.org}{info@vldb.org}. Copyright is held by the owner/author(s). Publication rights licensed to the VLDB Endowment. \\
\raggedright Proceedings of the VLDB Endowment, Vol. \vldbvolume, No. \vldbissue\ %
ISSN 2150-8097. \\
\href{https://doi.org/\vldbdoi}{doi:\vldbdoi} \\
}\addtocounter{footnote}{-1}\endgroup
%%% VLDB block end %%%

%%% do not modify the following VLDB block %%
%%% VLDB block start %%%
\ifdefempty{\vldbavailabilityurl}{}{
\vspace{.3cm}
\begingroup\small\noindent\raggedright\textbf{PVLDB Artifact Availability:}\\
The source code, data, and/or other artifacts have been made available at \url{\vldbavailabilityurl}.
\endgroup
}
%%% VLDB block end %%%
\section{Introduction}\label{sec:introduction}

% 海量浮点时序数据在源源不断产生，占用大量存储空间。例如xxxx。

With the proliferation of sensing devices and high performance computing (HPC), an unprecedented sheer volume of floating-point data are generated, consuming a humongous amount of storage space. For example, modern airliners (e.g., Boeing 787) are equipped with extensive sensor arrays monitoring flight status, where a single flight can generate up to 0.5 terabytes (TB) of raw floating-point sensor data~\cite{jensen2017time}. In the field of HPC, an individual scientific simulation can generate floating-point data volumes reaching terabytes (TB) or even petabytes (PB)~\cite{chen2024fcbench, liu2021high}. 

%对这些数据进行压缩，能够减少空间占用。在评价一个压缩算法的优劣时，通常从是否损失信息、压缩率、压缩/解压吞吐量进行考量。有损压缩算法压缩率好，但是通常会损失信息，在某些关键场景中无法适用。例如，在数据库系统中，用户通常无法忍受他们存储的数据被篡改了。在一些场景中，小的误差也可能带来巨大灾难。一些通用压缩算法虽然能够实现浮点时序数据的无损压缩，保证数据的可用性（fidelity），但是因为没有利用浮点时序数据的特征，压缩率不佳。近年来也出现了许多针对浮点时序数据的无损压缩算法，能够达到很好的压缩率，但是由于仅仅利用CPU硬件资源，压缩/解压吞吐量仍有较大的提升空间。
One of the best ways to reduce the storage costs is to compress these data before we store them~\cite{tang2025improving}. When evaluating the performance of a compression algorithm, key considerations typically include its losslessness, compression ratio, and compression/decompression throughput. \textbf{Lossy compression methods} (e.g., SZ~\cite{di2016fast, liang2018error, zhao2020significantly, liang2022sz3}, Machete~\cite{shi2024machete}, MOST~\cite{yang2023most} and Serf~\cite{li2025serf}) could usually achieve impressive compression ratios but would lose some information, making them inapplicable in some critical scenarios. For example, the integrity of stored data is non-negotiable in databases~\cite{wang2020apache, li2020just}, where any compromise through unauthorized alteration is fundamentally unacceptable to users. In addition, the high-risk characteristic of aviation~\cite{jensen2017time} necessitates absolute data fidelity, as seemingly negligible errors can trigger catastrophic disasters. Although \textbf{general-purpose compression algorithms} (e.g., Gzip~\cite{gzip}, Lz4~\cite{LZ4}) can achieve lossless compression for floating-point data, they yield suboptimal compression ratios due to their inability to exploit inherent patterns in such datasets. Recently, many \textbf{lossless compression algorithms} tailored for floating-point data (e.g., Elf~\cite{li2023elf}, ALP~\cite{afroozeh2023alp} and Camel~\cite{yao2024camel}) achieve superior compression ratios, but their throughput for both compression and decompression remains constrained by CPU (Central Processing Unit)-bound implementations, leaving substantial room for hardware acceleration.

%最近几十年，硬件设备发展迅速，GPU为进一步加速浮点时序压缩提供了可能。尽管已经出现了一些基于GPU的通用无损压缩算法，但是他们的压缩效果仍然不佳。论文[1]提出了一种浮点数据转换算法，首先采用聚类的方法将一个数据集内的所有记录进行重排，使得相关的信息存储在了一起，然后采用通用压缩算法（可以是GPU版本的）进行压缩，一定程度上提升了压缩率和吞吐量，但是它在聚类重排阶段无法充分利用GPU资源，也没有设计专门的浮点时序数据的编码解码方案。据我们所知，目前尚不存在专门针对浮点时序数据的GPU无损压缩算法。

Over recent decades, exponential hardware advancement has positioned GPUs (Graphics Processing Units) as promising accelerators for floating-point compression. Although several \textbf{GPU-accelerated general-purpose lossless compressors}~\cite{nvcomp2023} have emerged, their compression performance remains suboptimal. The approach in~\cite{jamalidinan2025floating} introduces a floating-point transformation framework that first reorganizes records via clustering to co-locate correlated data, followed by GPU-enabled generic compression. While this hybrid scheme moderately improves both compression ratios and throughput, it suffers from underutilized GPU resources during the clustering phase and lacks dedicated encoding/decoding mechanisms for floating-point data. There are also a negligible number of \textbf{GPU-based floating-point-specified compression methods}~\cite{9910073, o2011floating, yang2015mpc}. However, they are specifically targeted at scientific floating-point data and lack deliberate CPU-GPU co-design, resulting in suboptimal compression ratios and compromised throughput for generic floating-point datasets.

\begin{figure}
  \centering
  \includegraphics[width=\linewidth]{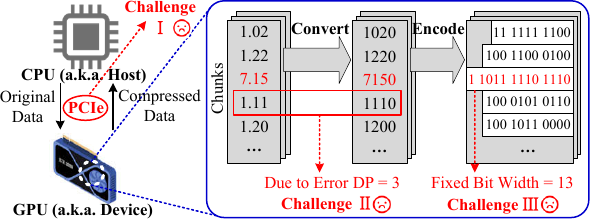}
  \caption{Challenges of GPU-Based Lossless Compression.}
  \label{fig:challenge}
  \vspace{-10pt}
\end{figure}

%设计基于GPU的浮点时序无损压缩算法将面临以下几个挑战。首先，GPU逻辑运算能力较差。现有的基于CPU的浮点时序压缩算法为了保证好的压缩率，往往设计了非常复杂的逻辑运算，因此难以直接迁移到GPU中。也就是说，设计一个压缩率好但适合GPU运算的无损压缩算法非常困难。其次，浮点数底层的结构较为复杂，为了减少逻辑运算以更好地适配GPU，一种做法是将浮点数转化成整数，然后采用整数压缩方法进行压缩。但在转化过程当中，需要计算每个浮点数的精度，以便在解压过程中无损地恢复成原来的浮点数。由于浮点数在运算过程中会带来一些误差，如何快速精确地计算浮点数的精度以实现无损压缩非常困难。第三，在同一个浮点时序数据集中，往往会出现一些数值过大的异常记录，这些记录会导致整数字节矩阵非常稀疏，进一步导致压缩性能下降，因此如何处理异常记录以进一步提高压缩性能也是一大挑战。第四，从整个压缩流程来看，原始数据要通过CPU从外存读取，通过PCIe总线传输到GPU中，压缩后的数据又经过PCIe总线返回给CPU，最后写入外存。这个过程中，数据传输通常是个瓶颈，如何通过CPU和CPU协同以充分利用PCIe总线带宽，从而提升系统整体压缩吞吐量是一大挑战。

%GPUs, though possessing powerful parallelism, exhibit inferior logic operation capabilities compared to CPUs. Existing CPU-based floating-point compressors employ intricate computational logic to achieve good compression ratios, which are incompatible with GPU architectures. Consequently, to maintain competitive compression ratios, it requires to devise a novel compression algorithm that aligns with GPU execution models. As illustrated in Fig.~\ref{fig:challenge}, designing a GPU-based lossless compression algorithm for floating-point data faces three key challenges.

GPUs, though possessing powerful parallelism, exhibit inferior logic operation capabilities compared to CPUs. Existing CPU-based floating-point compressors are incompatible with GPU architectures due to their fundamentally different execution models. As illustrated in Fig.~\ref{fig:challenge}, designing a GPU-based lossless compression algorithm for floating-point data faces three key challenges.  

\textbf{Challenge~\uppercase\expandafter{\romannumeral1}, Heterogeneous Data Movement Bottlenecks}. From the perspective of the entire compression workflow, original floating-point data are read from the external storage by the CPU (a.k.a. \textbf{Host}) and then transferred to the GPU (a.k.a. \textbf{Device}) via the PCIe (Peripheral Component Interconnect Express) (a process called \textbf{H2D}). After the GPU finishes the compression tasks, the compressed data are returned to the CPU from the GPU via the same PCIe (a process called \textbf{D2H}), and finally written to the external storage by the CPU. In this process, PCIe data transmission usually acts as a bottleneck, during which the CPU or GPU may enter an idle state, leading to a reduction in overall system throughput. Therefore, the first challenge is how to fully utilize the PCIe bandwidth through CPU-to-GPU coordination to maximize the overall throughput.

\textbf{Challenge~\uppercase\expandafter{\romannumeral2}, Precision-Preserving Conversion Complexity}. Floating-point values have complex underlying layouts according to the IEEE 754 standard~\cite{kahan1996ieee}. Most existing floating-point compression methods incorporates complicated logic operations to achieve good compression ratio. To avoid GPU-unfriendly operations, one strategy converts floating-point values to integers followed by integer compression. This necessitates decimal place (cf. Definition~\ref{def:DP}) tracking to ensure lossless reconstruction. For example, the decimal place of 1.02 is 2. During compression, 1.02 is converted to 102 since $1.02 \times 10^2 = 102$. During decompression, we recover 1.02 by $102 \div 10^2$. However, it is non-trivial for machines to calculate the decimal places exactly and efficiently. One time-consuming method is to first convert 1.02 to a string representation ``1.02", followed by the analysis of the resulting string. To accelerate the calculation, the works~\cite{kuschewski2023btrblocks, Elf+} adopt a trial-and-error method, which gradually increases $i$ from 0 and checks whether $1.02 \times 10^i = \lfloor 1.02 \times 10^i\rfloor$. Once $i = 2$, the equation satisfies, so its decimal place is considered to be 2. Although fast, this method might introduce some errors due to the intrinsic imprecision of floating-point arithmetic. For example, most programming languages consider $1.11 \times 10^2$ to be $ 111.00000000000001$ and $1.11 \times 10^3$ to be $1110.0$, so they incorrectly assume that the decimal place of 1.11 is 3 ({\bf for clarify, in the following of this paper, we use $\otimes$ to denote the multiplication operation that may introduce errors, while reserving the term $\times$ for the multiplication in mathematics}). To minimize the storage overhead of decimal place tracking, we can only store the maximum decimal place per data chunk, which introduces systemic vulnerability: any single-value computation error propagates catastrophically throughout the entire chunk, as shown in Fig.~\ref{fig:challenge}. Hence, computing the decimal places exactly and efficiently plays a vital role in enhancing the compression performance.

\textbf{Challenge~\uppercase\expandafter{\romannumeral3}, Anomaly-Induced Sparsity Degradation}. Floa-ting-point datasets frequently contain outliers that would be exaggerated during integer conversion. When converted to integer bit planes, these outliers induce severe sparsity that catastrophically degrades compression performance. As shown in Fig.~\ref{fig:challenge}, although 7.15 constitutes a minor floating-point anomaly, its integer conversion 7150 creates a prominent outlier. To minimize bit-width specification overhead and leverage GPU parallelism, bit-aligned plane representation can be employed, i.e., we represent each integer with fixed 13 bits, causing excessive leading-zero bits for all values except 7150. Therefore, how to handle these outlier records to further improve compression performance is also a major challenge.

In this paper, we propose {\em Falcon}, a GPU-based \textbf{\underline{F}}loating-point \textbf{\underline{A}}daptive  \textbf{\underline{L}}ossless \textbf{\underline{CO}}mpressio\textbf{\underline{N}} framework. {\em Falcon} is composed of a series of meticulously designed stages that are highly parallelizable and tailored for the GPU architecture. Specifically, to address \textbf{Challenge~\uppercase\expandafter{\romannumeral1}}, {\em Falcon} introduces a lightweight asynchronous pipeline to effectively hide the I/O latency between the CPU and GPU. To address \textbf{Challenge~\uppercase\expandafter{\romannumeral2}}, {\em Falcon} proposes an accurate and fast float-to-integer transformation scheme with theoretical guarantees to eliminate the errors introduced by floating-point arithmetic. To address \textbf{Challenge~\uppercase\expandafter{\romannumeral3}}, {\em Falcon} constructs an adaptive sparse bit-plane encoding strategy to reduce the sparsity cased by outliers.
Overall, the main contributions of this paper are as follows:

(1)~We propose a novel asynchronous compression pipeline with a lightweight event-driven scheduler, which optimizes the utilization of CPU and GPU resources during PCIe data transmission and improves the compression throughout remarkably.
    
(2)~We propose a fast and accurate method with theoretical guarantees for computing the decimal places of floating-point data, eliminating the errors caused by floating-point arithmetic. Based on this, we further devise a lossless conversion scheme from floating-point to integer representation, which improves both compression ratio and compression throughout tremendously.
    
(3)~We design an adaptive sparse bit-plane lossless encoding strategy, which reduces the sparsity caused by outliers and enhances the compression ratio greatly.

(4)~Extensive experiments on \textcolor{black}{12} diverse floating-point datasets show that, compared to \textcolor{black}{10} advanced competitors, {\em Falcon} achieves the best compression ratios (average \textcolor{black}{0.299}) and leads the maximum throughput for both compression and decompression (average \textcolor{black}{10.82} GB/s and 12.32 GB/s, respectively). Notably, compared to the best competitors, {\em Falcon} enjoys an relative improvement of \textcolor{black}{9\%} in compression ratio, \textcolor{black}{$2.43\times$} in compression throughout, and \textcolor{black}{$2.4\times$} in decompression throughout, respectively.

The remainder of this paper is organized as follows. In Section~\ref{sec:preliminaries}, we present some preliminaries. We respectively introduce {\em Faclon} compression and decompression in Section~\ref{sec:compression} and Section~\ref{sec:decompression}. The experimental results are shown in Section~\ref{sec:experiment}. In Section~\ref{sec:relatedwork}, we review the related works. Finally, we conclude this paper in Section~\ref{sec:conclusion}.
% \section{introduction}

\vspace{-10pt}
\section{Preliminaries}\label{sec:preliminaries}

This section introduces the floating-point layout and some definitions, and provides the knowledge about the GPU execution model. Table~\ref{tbl:symbols} lists the symbols used frequently throughout this paper.

\setlength{\tabcolsep}{0.15em} % for the horizontal padding
\begin{table}[t]
\small
\caption{Symbols and Their Meanings}
\vspace{-10pt}
\centering\label{tbl:symbols}
\resizebox{3.2in}{!}{
\begin{tabular}{|c|l|} 
\hline
\textbf{Symbols}			&\textbf{Meanings}	\\
\hline
\hline
$v, z$  & Floating-point value, converted integer\\
\hline
$s$, $\vec{\boldsymbol{e}} = \langle e_1e_2...e_{11}\rangle$, & Sign bit, exponent bits, mantissa bits under IEEE\\
$\vec{\boldsymbol{m}} = \langle m_1m_2...m_{52}\rangle$     & 754 format, where $s, e_i, m_j \in \{0, 1\}$\\
\hline
$\alpha = DP(v), \beta = DS(v)$ & Decimal place and decimal significand of $v$\\
\hline
$v^{(i)} = v \times 10^{i}$ & Decimally scaled value in mathematics\\
\hline
$V = \{ v_{1}, v_{2}, ..., v_{n} \}$ & A set of $n$ floating-point values\\
\hline
$Z = \{ z_1, z_2, ..., z_n \}$ & A set of $n$ converted integer values\\
\hline
$ULP(v)$ & Unit in the Last Place of $v$\\
\hline
\end{tabular}
}
\vspace{-10pt}
\end{table}

\begin{figure}
  \centering
  \includegraphics[width=3.5in]{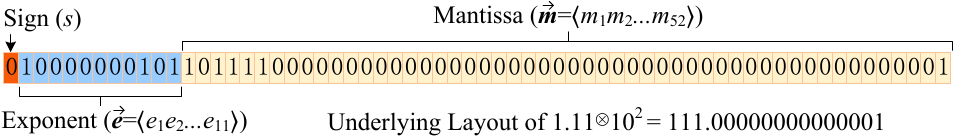}  
  \vspace{-20pt}
  \caption{Floating-Point Layout of Double Values.}
  \label{fig:ieee754}  
  \vspace{-10pt}
\end{figure}

\subsection{Floating-Point Layout}\label{subsec:fplayout}

A floating-point value $v$ can be double-precision (i.e., double value), single-precision (i.e., single value) or half-precision, etc. This paper mainly focuses on \textbf{double values} as they are the most widely applied, but {\em Falcon} can be easily extended to other precisions since all precisions share a similar underlying layout. As shown in Fig.~\ref{fig:ieee754}, according to IEEE 754 standard~\cite{kahan1996ieee}, a double value is stored with 64 binary bits, where 1 bit is for the sign $s$ (i.e., `0' stands for positive and `1' represents negative), 11 bits for the exponent $\vec{\boldsymbol{e}} = \langle e_1e_2... e_{11}\rangle$ (controlling the position of the decimal point), and 52 bits for the mantissa $\vec{\boldsymbol{m}} = \langle m_1m_2...m_{52}\rangle$ (recording the decimal significand of the value shown in Definition~\ref{def:DP}). Specifically, if all 64 bits are zeros, $v = 0$. Otherwise, we have~\footnote{In fact this is only true for normal numbers. Special numbers~\cite{li2023elf, kahan1996ieee} (which are rarely used) are out of the discussion scope of this paper.}:
\begin{equation}\label{equ:floating}
v = (-1)^{s} \times 2^{e - 1023} \times (1 + \sum_{i=1}^{52} m_i \times 2^{-i})
\end{equation}

\noindent where $e = \sum_{i=1}^{11} e_i \times 2^{11-i}$ (we also have $e = \lfloor log_2|v| \rfloor + 1023$). The 1 in the rightmost parenthesis in Equ.~(\ref{equ:floating}) is a hidden bit, which can be regarded as $m_0$ (i.e., there are in fact 53 bits for the decimal significand). For each $m_i$ where $1 \leq i \leq 52$, it contributes $2^{e-1023} \times m_i \times 2^{-i} = 2^{\lfloor log_2|v| \rfloor} \times m_i \times 2^{-i}$ to the value of $v$.

\sstitle{Discussion} Due to the limited bits, Equ.~(\ref{equ:floating}) actually contains certain approximation, which might introduce some errors for floating-point arithmetic. The introduced error would be reflected in the last mantissa bit (i.e., $m_{52}$). For example, $v = 1.11 \times 10^2 = 111.0$ mathematically (for $111.0$, its $m_{52} = 0$), but in computers $v = 1.11 \otimes 10^2 = 111.00000000000001$, which sets $m_{52}$ to 1 as shown in Fig.~\ref{fig:ieee754}. If $v \neq 0$, the introduced error will be no more than the value that $m_{52}$ can contribute, a value also called Unit in the Last Place (ULP) of $v$, denoted by $ULP(v) = 2^{\lfloor log_2|v| \rfloor} \times 2^{-52}$.
For any normal number $v$, we have $|v| \times 2^{-52} = m \times 2^{e - 1023} \times 2^{-52} = m \times ULP(|v|)$, where $m = 1 + \sum_{i=1}^{52} m_i \times 2^{-i}$ and $1 \leq m < 2$. Hence, $ULP(|v|) \leq |v| \times 2^{-52} < 2\times ULP(|v|) \Leftrightarrow ULP(|v|) \times 2^{52} \leq |v| < 2\times ULP(|v|) \times 2^{52}$. 

\subsection{Definitions}
\begin{mydef}
\textbf{Decimal Format~\footnote{For clarity, the decimal format, decimal place and decimal significand used in this paper differ from these in \cite{li2023elf}, respectively.}.} Given a floating-point value $v$, its decimal format is $DF(v) = \pm(d_{h-1}d_{h-2}...d_0.d_{-1}d_{-2}...d_l)_{10}$, where for $l \leq i \leq h-1$, $d_i \in \{$`$0$', `$1$', ..., `$9$'$\}$, $d_{h-1} \neq $`$0$' unless $h = 1$, and $d_l \neq $`$0$'. That is, $DF(v)$ would not start with `0' except that $h = 1$, and would not end with `0'. `$\pm$' means `$+$' or `$-$' denoting the sign of $v$. If $v \geq 0$, `$+$' is usually omitted.
\end{mydef}

For example, $DF(1) = (1.)_{10}$ and $DF(-0.20) = -(0.2)_{10}$. %From the mathematical view, for any integer $i \geq 0$, $DF(v \times 10^i)$ can be obtained by shifting the decimal point of $DF(v)$ to the right by $i$ positions.

\begin{mydef}\label{def:DP}
\textbf{Decimal Place and Decimal Significand.} Given a floating-point value $v$ with its decimal format $DF(v) = \pm(d_{h-1}d_{h-2}$ $...d_0.d_{-1}d_{-2}...d_l)_{10}$, we define $DP(v) = |l|$ as its decimal place. If $v \neq 0$ and for all $ h - 1 \geq i \geq k > l$, $d_i =$ `$0$' but $d_{k-1} \neq 0$ (i.e., $d_{k-1}$ is the first digit from the left that is not equal to `0'), we define $DS(v) = k - l$ as its decimal significand. If $v = 0$, we let $DS(v) = 0$.
\end{mydef}

For example, $DP(-0.0314) = 4$, $DS(-0.0314) = 3$; $DP(1.11) = 2$, $DS(1.11) = 3$; $DP(111.) = 0$, $DS(111.) = 3$; $DP(111.00000000000001) \\= 14$, $DS(111.00000000000001) = 17$.

Given a floating-point value $v \neq 0$, its decimal place and decimal significand have the following relation:
\begin{equation}\label{equ:relation}
DS(v) = DP(v) + \lfloor log_{10}|v| \rfloor + 1
\end{equation}

We use $\alpha$ and $\beta$ to respectively represent the decimal place and decimal significand of a floating-point value if we consider them to be variables or we do not care about the floating-point value $v$.

%\begin{mydef}
%\textbf{Integer Floating-Point Value.} Given a floating-point value $v$, if $DP(v) = 0$, we call $v$ integer floating-point value.
%\end{mydef}
%
%For example, 0 and 111 are both integer floating-point values.

\begin{figure}[t]
  \centering
  \includegraphics[width=3.2in]{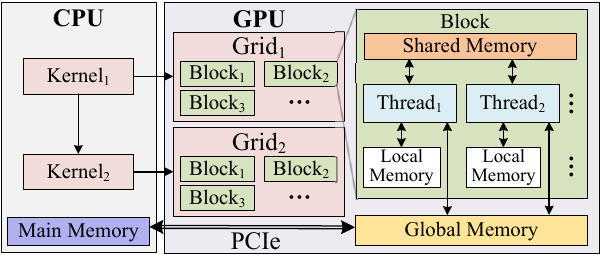}
  \vspace{-5pt}
  \caption{Illustration of GPU Execution Model.}
  \label{fig:gpu_arch}
  \vspace{-10pt}
\end{figure}

\begin{figure*}
  \centering
  \includegraphics[width=7in]{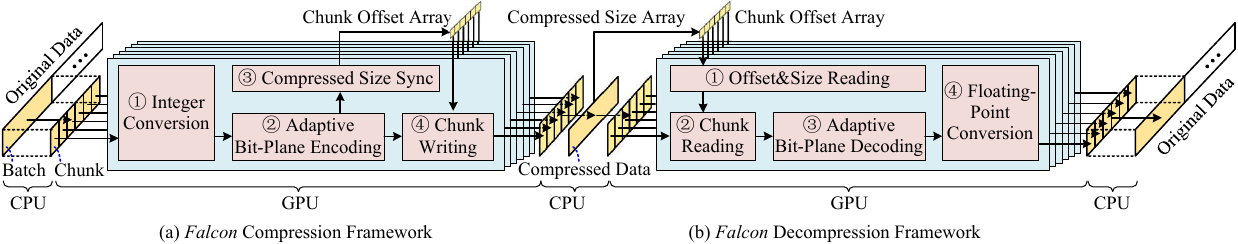}
  \vspace{-20pt}
  \caption{Framework of {\em Falcon}.}
  \vspace{-10pt}
  \label{fig:framework}
\end{figure*}

\subsection{GPU Execution Model}
%FALCON-g旨在充分利用现代GPU的架构。理解这个模型是领会我们设计选择的关键。图2展示了高层级的框架。
{\em Falcon} is designed to exploit the architecture of modern GPUs. Understanding the GPU execution model is key to appreciating our design choices. Fig.~\ref{fig:gpu_arch} illustrates the high-level framework.

\sstitle{Execution Hierarchy} A GPU program (a.k.a. \textit{Kernel}) is launched by the CPU and executed on the GPU by a grid of {\em blocks}. Each block contains a group of {\em threads}. Threads are scheduled in a warp (typically 32 threads), which executes in a Single Instruction, Multiple Threads (SIMT) fashion. The massive parallelism is ideal for data-parallel tasks like compression.

\sstitle{Memory Hierarchy} The GPU features a memory hierarchy. \textit{Global memory} is large but has high latency. \textit{Shared memory} is small, low-latency, and shared among threads within a block, making it perfect for inter-thread collaboration. \textit{Local memory} is private to each thread. Efficient GPU algorithms must carefully manage data movement across this hierarchy to minimize latency.

\sstitle{Host-Device Interaction} The CPU (a.k.a. {\em Host}) and the GPU (a.k.a., {\em Device}) communicate over the PCIe. This communication is usually a significant performance bottleneck. %Our framework in Section~\ref{sec:gpu} is specifically designed to mitigate this bottleneck through asynchronous execution.

\section{{\em Falcon} Compression}\label{sec:compression}
%先写一个压缩部分的框架，之后再gpu部分强调如何适配
In this section, we introduce {\em Falcon} compression. As shown in Fig.~\ref{fig:framework}(a), {\em Falcon} compression is an {\bf asynchronous pipeline}, where the CPU first sequentially reads a batch of original data (called a \textbf{batch}) from external storage, and then copies this batch from the main memory to the global memory via PCIe (a.k.a. H2D). In the global memory, the batch is logically partitioned into several equal-sized disjoint {\bf chunks}, each of which is processed by one GPU thread in four steps: 1)~\textbf{Integer Conversion}, which converts the floating-point values in the corresponding chunk into integers; 2)~\textbf{Adaptive Bit-Plane Encoding}, which encodes the converted integers into compressed binary bits; 3)~\textbf{Compressed Size Sync}, which synchronizes the compressed chunk size among threads, so we can get the chunk offset array; and 4)~\textbf{Chunk Writing}, which directly writes the compressed data into the right position in the global memory with the help of the chunk offset array. After the four steps are finished, the CPU copies the compressed data and the chunk offset array into the main memory via PCIe (a.k.a. D2H), and finally writes them into the external storage. Next, we first introduce the pipeline, and then present each step in detail.

\subsection{Asynchronous Pipeline}\label{subsec:pipeline}

% 四阶段流水线，异步压缩后的数据集大小，从框架层面解释，具体细节在后面说

The primary bottleneck in GPU-based compression stems from the I/O latency between the CPU and GPU. To this end, {\em Falcon} implements an asynchronous pipeline to hide this latency, where the pipeline is executed at the granularity of data batches. However, the compressed size of each batch is generally unpredictable prior to execution. When multiple batches are processed concurrently, their results may be written to memory simultaneously, potentially leading to conflicts in memory access.

\begin{figure}[t]
  \centering
  \includegraphics[width=3.4in]{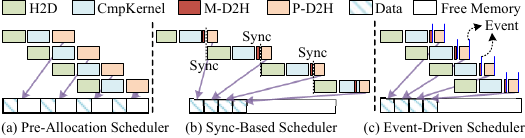}
  \vspace{-15pt}
  \caption{Different Schedulers of Pipeline.}
  \label{fig:DifferentSchedulers}
  \vspace{-10pt}
\end{figure}

To address this problem, the most straightforward approach is Pre-Allocation Scheduler, as shown in Fig.~\ref{fig:DifferentSchedulers}(a), which pre-allocates a fixed-size memory space for the compressed results of each batch. Upon completion of processing, the results are copied directly into the corresponding pre-allocated space. However, if the allocated space is too large, it leads to memory wastage; if too small, it risks memory conflicts. Besides, since the CPU is unaware of the actual compressed size, it may copy a substantial amount of useless data during D2H, thereby wasting valuable PCIe bandwidth. Additionally, this approach requires an extra merging step, which reduces throughput. An alternative solution is Sync-Based Scheduler, as shown in Fig.~\ref{fig:DifferentSchedulers}(b), which decouples D2H into two phases: M-D2H (synchronizing the compressed size) and P-D2H (synchronizing the actual compressed data). After the GPU compression (i.e., CmpKernel) of the previous batch completes, the compressed size is first sent to the CPU via M-D2H. The CPU can immediately launches the compression of the next batch, since the offset for storing the next batch’s results can now be determined. This allows the P-D2H of the previous batch to overlap with the H2D of the next batch, leveraging the bidirectional communication capability of PCIe. However, this method still limits throughput, as the next batch can only start after the M-D2H step of the previous batch finishes.

\begin{figure}[t]
  \centering
  \includegraphics[width=3.4in]{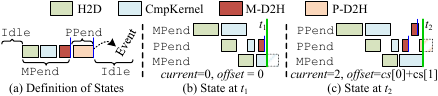}
  \vspace{-15pt}
  \caption{Illustration of Event-Driven Scheduler.}
  \label{fig:mutiStreamState}
  \vspace{-10pt}
\end{figure}

To resolve the issues present in the two aforementioned methods, this paper proposes Event-Driven Scheduler that makes full use of GPU event mechanisms, as shown in Fig.~\ref{fig:DifferentSchedulers}(c). Specifically, Event-Driven Scheduler divides the execution of a batch into three distinct states, as depicted in Fig.~\ref{fig:mutiStreamState}(a). Initially, the batch remains in {\tt Idle} before the processing begins. Once the batch starts and proceeds through H2D, CmpKernel and M-D2H, its state transitions to {\tt MPend}, indicating that the CPU is waiting for its meta compressed data size. Upon completion of M-D2H, the GPU immediately sends the first event to the CPU, prompting the batch to enter the {\tt PPend} state, meaning that the CPU is waiting for the compressed data. Finally, after P-D2H completes, the GPU sends the second event to the CPU, signaling the completion of the batch and returning it to the {\tt Idle} state. The essential is to determine the compressed data offsets of batches based on their launch order.

\begin{algorithm}[t]
\small
	\caption{$EventDrivenScheduler(in, N_{s}, batchSize)$}\label{alg:EventDrivenScheduler}
	\tcc{Initialization}
	$stream[N_{s}], state[N_{s}] \leftarrow \{{\tt Idle}\}, *gIn[N_{s}], *gOut[N_{s}]$\;
	$eventQ[N_{s}]$, $cs[N_s]$, $current \leftarrow 0$, $active \leftarrow 0$\;
	$*cache, offset \leftarrow 0$, $batch \leftarrow in.read(batchSize)$\;
	\tcc{Verification Loop}
	\While{$batch \neq null$ {\rm or} $active > 0$}{
		\For{$i$ {\rm from }$0$ {\rm to }$(N_{s}-1)$}{
			\If{$state[i] = {\tt Idle}$}{
				$H2DAsync(batch, gIn[i], stream[i])$\;
				$CmpKernelAsync(gIn[i], gOut[i], stream[i])$\;
				$MD2HAsync(gOut[i], cs[i], stream[i])$\;
				$SendEventAsync(eventQ[i], stream[i])$\;
				$active \leftarrow active + 1$; $state[i] \leftarrow {\tt MPend}$\;
			}
			\ElseIf{$state[i] = {\tt MPend}$}{
				\If{$current = i$ {\tt and} $ReceiveEvent(eventQ[i])$}{
					$PD2HAsync(gOut[i], cache + offset, stream[i])$\;
					$SendEventAsync(eventQ[i], stream[i])$\;
					$current \leftarrow (current + 1)\% N_s$\; 
					$offset \leftarrow offset + cs[i]$;
					$state[i] \leftarrow {\tt PPend}$\;
				}
			}
			\ElseIf(\tcp*[h]{{\tt PPend}} state){$ReceiveEvent(eventQ[i])$}{
				$batch \leftarrow in.read(batchSize)$\;
				$active \leftarrow active - 1$; $state[i] \leftarrow {\tt Idle}$; 
			}
		}
	}
	\Return{$(cache, offset)$}\;
\end{algorithm}

Algorithm~\ref{alg:EventDrivenScheduler} presents the pseudocode of Event-Driven Scheduler, which takes as input the stream of original values  $in$, the number of pipeline streams $N_s$, and the number of values per batch $batchSize$. It returns the starting address of the compressed data and the total size after compression. The algorithm is broadly divided into two parts: {\tt Initialization} and {\tt Verification Loop}.

In {\tt Initialization}, we declare several key variables. Specifically, $stream[N_s]$ and $state[N_s]$ represent the $N_s$ streams and their corresponding states (initialized to {\tt Idle}), respectively. $gIn[N_s]$ and $gOut[N_s]$ are allocated in the global memory of GPU, storing the input (pre-compression) and output (post-compression) data for each stream. Additionally, $eventQ[N_s]$ denotes the event queue associated with each stream, $cs[N_s]$ stores the compressed size for each stream, $current$ indicates the index of the stream currently awaiting the compressed size (referred to as the current stream), $active$ records the number of active streams (i.e., not {\tt Idle}), $cache$ represents the starting address of the compressed results, $offset$ tracks the offset after obtaining the compressed result of the current stream, and $batch$ holds the raw data for the upcoming batch.

In {\tt Verification Loop}, the algorithm continuously iterates through each stream as long as there remains unprocessed data ($batch \neq null$) or active streams ($active > 0$). For each stream examined, corresponding operations are performed based on its state. Due to space limitations and the self-explanatory nature of the pseudocode, a detailed step-by-step explanation is omitted here. Note that all functions labeled with ``Async'' are asynchronous; the CPU does not wait for their completion before proceeding to the next line of code. The last parameter of each asynchronous function specifies the stream to which it belongs, and all asynchronous operations within the same stream are executed sequentially.

Fig.~\ref{fig:mutiStreamState}(b) gives an example of pipeline with three streams. At time $t_1$, although Stream 1 has completed M-D2H stage, it must wait for Stream 0 to finish its M-D2H, because the CPU cannot yet determine its result offset. At time $t_2$, as shown in Fig.~\ref{fig:mutiStreamState}(c), both Stream 0 and Stream 1 have completed their M-D2H stages, so $current$ is updated to 2, and $offset$ is set to $cs[0] + cs[1]$. The P-D2H stage of Stream 1 may be executed before that of Stream 0.

\subsection{Integer Conversion}\label{subsec:integerConversion}

% 浮点数据底层结构非常复杂。现有的CPU方法为了提升浮点数据的压缩率，采用了许多复杂的逻辑操作，这不适合GPU并行化，因为GPU采用的是SIMT，如果逻辑分支过多，导致同一个线程束退化成了串行执行，极大地降低了吞吐量。为此，本文设计了一个适合GPU的浮点压缩方法，首先将浮点数转化成整数，然后设计整数压缩方法。在转化成整数的过程中，挑战是如何快速准确获得Decimal Place

In accordance with IEEE 754 standard~\cite{kahan1996ieee}, floating-point values have rather complex underlying layouts, where a small fluctuation in the value will result in a significant change at the bit level. To enhance the compression ratio, existing CPU-based floating-point compression methods adopt many complicated logic operations that are not friendly to the GPU parallelization. This is because the GPU works in a Single Instruction, Multiple Threads (SIMT) fashion. If there are too many logical branches, it may cause the same threads within a warp to execute sequentially, severely degrading throughput. To this end, this paper devises a GPU-friendly floating-point compression method. Intuitively, given a floating-point value $v$ with its decimal place $\alpha = DP(v)$, we can convert it to a more compressible integer\footnote{The converted integer from a double value also occupies 64 bits.} $round(v \times 10^\alpha)$, and then encode the integer with a GPU-friendly encoding strategy. However, during the conversion process, there are still two problems. {\bf Problem~\uppercase\expandafter{\romannumeral1}}: Can the original value be recovered losslessly by the converted integer? {\bf Problem~\uppercase\expandafter{\romannumeral2}}: How to calculate the decimal place of a floating-point value exactly and efficiently? 

\subsubsection{Correctness of Conversion}\label{subsubsec:correctConversion}

In this subsection, we assume that the decimal place $\alpha$ of $v$ is already given. For the convenience of expression, we provide the definition of {\bf Decimally Scaled Value}.

\begin{mydef}\label{def:dsv}
\textbf{Decimally Scaled Value.} Given a floating-point value $v$, $\alpha = DP(v)$, if we only consider the mathematical rule, we call $v^{(i)} = v \times 10^{i}$ the decimally scaled value of $v$, where $i \geq 0$.
\end{mydef}

\begin{theorem}[Decimal Significand Invariance]\label{the:keepbeta}
Given $v$ with $\alpha = DP(v)$, for any $i \in \{0, 1, ..., \alpha\}$, we have $DS(v) = DS(v^{(i)})$.
\end{theorem}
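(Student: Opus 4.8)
The plan is to argue directly from the definitions of decimal place and decimal significand. Write $DF(v) = \pm(d_{h-1}\cdots d_0.d_{-1}\cdots d_{-\alpha})_{10}$ in decimal format, so that $\alpha = DP(v) = |{-\alpha}|$, the leftmost nonzero digit sits at position $k-1$ where $k - 1 \ge -\alpha$, and $DS(v) = (k-1) - (-\alpha) + 1 = k + \alpha$. The key observation is that multiplying $v$ by $10^i$ for $0 \le i \le \alpha$ is, at the level of the decimal-digit string, merely a shift of the decimal point $i$ places to the right: the nonzero digit pattern $d_{k-1}\cdots d_{-\alpha}$ is carried along unchanged, because no rounding or truncation occurs (we are using the exact mathematical product $v^{(i)} = v \times 10^i$, per Definition~\ref{def:dsv}, not the machine product $\otimes$).

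First I would make precise what ``the digit string is preserved'' means for the two quantities in Definition~\ref{def:DP}. Under the shift, the last nonzero digit $d_{-\alpha}$ of $v$ moves from position $-\alpha$ to position $-\alpha + i$; since $i \le \alpha$, this position is $\le 0$, so $v^{(i)}$ still has a (possibly empty) fractional part ending in a nonzero digit, and its decimal place is $DP(v^{(i)}) = \alpha - i$. Likewise the leftmost nonzero digit $d_{k-1}$ moves from position $k-1$ to position $k-1+i$, so for $v^{(i)}$ the quantity ``$k$'' of Definition~\ref{def:DP} becomes $k + i$. Substituting into the definition of $DS$ gives $DS(v^{(i)}) = (k+i) + (\alpha - i) = k + \alpha = DS(v)$, which is exactly the claim. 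A clean way to package the shift argument, avoiding digit-string bookkeeping, is to invoke Equation~(\ref{equ:relation}): $DS(v^{(i)}) = DP(v^{(i)}) + \lfloor \log_{10}|v^{(i)}| \rfloor + 1$; since $|v^{(i)}| = |v| \cdot 10^i$ we have $\lfloor \log_{10}|v^{(i)}| \rfloor = \lfloor \log_{10}|v| \rfloor + i$, so it remains only to show $DP(v^{(i)}) = DP(v) - i$, after which the $i$'s cancel. I would handle the $v = 0$ case separately (trivial, both sides $0$) and restrict the main argument to $v \neq 0$.

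The remaining obligation, and the main obstacle, is therefore to prove cleanly that $DP(v \times 10^i) = DP(v) - i$ for all integers $0 \le i \le \alpha$ — equivalently, that $v \times 10^i$ still has exactly $\alpha - i$ nonzero fractional digits. The ``$\le$'' direction ($DP(v^{(i)}) \le \alpha - i$) is easy: $v \times 10^\alpha$ is an integer (by definition of decimal format / decimal place), so $v \times 10^i = (v \times 10^\alpha)/10^{\alpha - i}$ has at most $\alpha - i$ digits after the point. The ``$\ge$'' direction is the delicate one: I must rule out that extra trailing zeros appear, i.e. that $v \times 10^i$ is not divisible (as a rational with terminating decimal expansion) by a higher power of $10$ than expected. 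The argument is that $v \times 10^\alpha = \pm\,d_{h-1}\cdots d_{-\alpha}$ is an integer whose last digit $d_{-\alpha}$ is nonzero, hence $v \times 10^\alpha$ is not divisible by $10$; since $i \le \alpha$, the rational $v \times 10^i$ when written over the denominator $10^{\alpha-i}$ has numerator $v \times 10^\alpha$ coprime-to-$10$ in its units digit, so the fraction is already in lowest terms with respect to the factor $10$ and the decimal expansion genuinely needs all $\alpha - i$ places. I would phrase this last step carefully — perhaps by noting $\lfloor v^{(i)} \times 10^{\alpha-i}\rfloor \bmod 10 = d_{-\alpha} \neq 0$ while, were $DP(v^{(i)}) < \alpha - i$, this digit would be $0$ — to make the non-vanishing of the trailing digit airtight, and then conclude by combining the two directions and Equation~(\ref{equ:relation}).
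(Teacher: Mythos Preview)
Your proposal is correct and follows essentially the same approach as the paper: both rest on the observation that $v^{(i)} = v \times 10^i$ merely shifts the decimal point $i$ places, leaving the nonzero digit pattern untouched and hence preserving $DS$. Your write-up is considerably more careful than the paper's one-line proof --- in particular your explicit verification that $DP(v^{(i)}) = \alpha - i$ (ruling out spurious trailing zeros via the nonvanishing of $d_{-\alpha}$) fills in a step the paper simply takes for granted.
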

\begin{proof}
From the mathematical view, for any $0 \leq i \leq \alpha$, $v^{(i)}$ merely shifts the decimal point of $v$ to the right by $i$ positions, i.e., it does not change the value of $k$ and $l$ in Definition~\ref{def:DP}. Therefore, from the mathematical view, $DS(v) = DS(v^{(i)})$.
\end{proof}

Although from the mathematical view, Theorem~\ref{the:keepbeta} keeps correct, due to the limited bits to represent a floating-point value, $v \otimes 10^i$ may introduce some error, i.e., it would round the overflowing bits (i.e., $m_{53}m_{54}...$ in Fig.~\ref{fig:ieee754} that we can image) to $m_{52}$. We call this {\bf binary round}. To get the mathematically correct converted integer $v^{(\alpha)}$, we perform another round operation (called {\bf decimal round}) on $v \otimes 10^\alpha$, i.e., $round(v \otimes 10^\alpha)$. We now prove that $round(v \otimes 10^\alpha) = v^{(\alpha)}$ under a certain condition.

\begin{theorem}[Conversion Correctness]\label{the:round}
Given $v$ with $\alpha = DP(v)$ and $\beta = DS(v)$, if $\beta \leq 15$, then $round(v \otimes 10^\alpha) = v^{(\alpha)}$.
\end{theorem}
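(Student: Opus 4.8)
The plan is to bound the error introduced by the floating-point multiplication $v \otimes 10^\alpha$ and show it is strictly less than $1/2$, so that rounding $v \otimes 10^\alpha$ to the nearest integer recovers the mathematically exact value $v^{(\alpha)} = v \times 10^\alpha$, which is an integer by the definition of the decimal place $\alpha = DP(v)$. The key quantities to control are: (i) the magnitude of $v^{(\alpha)}$, which is governed by $\beta = DS(v)$ since $v^{(\alpha)}$ has exactly $\beta$ significant decimal digits, hence $|v^{(\alpha)}| < 10^\beta$; and (ii) the ULP of the computed result, which bounds the rounding error of a single IEEE 754 multiplication.

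First I would observe that $v^{(\alpha)}$ is a nonzero integer with at most $\beta$ decimal digits, so $1 \le |v^{(\alpha)}| < 10^\beta \le 10^{15}$. Next I would invoke the standard IEEE 754 guarantee for a correctly-rounded multiplication: the computed value $v \otimes 10^\alpha$ differs from the true product $v \times 10^\alpha$ by at most half a ULP of the result, i.e. $|v \otimes 10^\alpha - v^{(\alpha)}| \le \tfrac12 ULP(v \otimes 10^\alpha)$. (Here one should note $10^\alpha$ is itself exactly representable for the small $\alpha$ in range, or otherwise absorb the extra rounding of $10^\alpha$ into the same style of bound with a slightly larger constant.) Using the relation from the Discussion subsection, $ULP(w) = 2^{\lfloor \log_2 |w| \rfloor} \times 2^{-52} \le |w| \times 2^{-52}$, and since $v \otimes 10^\alpha$ is within a factor $2$ of $v^{(\alpha)}$, we get $ULP(v \otimes 10^\alpha) \le 2 |v^{(\alpha)}| \times 2^{-52} < 2 \cdot 10^{15} \cdot 2^{-52}$.

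Then I would combine these: the total error satisfies $|v \otimes 10^\alpha - v^{(\alpha)}| < 10^{15} \cdot 2^{-52}$. Since $2^{52} = 4503599627370496 > 2 \cdot 10^{15}$, this gives $|v \otimes 10^\alpha - v^{(\alpha)}| < 1/2$. Therefore $v^{(\alpha)}$ is the unique integer within distance $1/2$ of $v \otimes 10^\alpha$, so $round(v \otimes 10^\alpha) = v^{(\alpha)}$, as claimed. I would state the numerical inequality $10^{15} < 2^{51}$ explicitly since that is the crux of why the bound $\beta \le 15$ (rather than $16$) is the right threshold.

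The main obstacle I anticipate is being careful about what ``$\otimes$'' actually denotes here. The clean half-ULP bound applies to a single IEEE 754 operation; but $v \otimes 10^\alpha$ may involve first forming (or approximating) $10^\alpha$ and then multiplying, or may be computed by repeated multiplication, each step of which injects its own rounding error. The honest version of the proof must either (a) restrict to the regime where $10^\alpha$ is exactly representable (true for $\alpha$ up to $22$ in double precision, which comfortably covers the cases of interest) so that only one rounding occurs, or (b) track the accumulated error across a bounded number of elementary operations and show it still stays under $1/2$ given $\beta \le 15$ — this is where the slack between $10^{15} \cdot 2^{-52} \approx 2^{-2}$ and $1/2$ gets consumed, so the argument needs the error budget to be genuinely tight. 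I would also double-check the edge behavior of $round$ (ties) and confirm the strict inequality rules ties out.
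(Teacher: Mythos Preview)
Your proposal is correct and reaches the same conclusion by the same underlying idea (bound $|v^{(\alpha)}| < 10^{15}$ and show the floating-point error is strictly less than $1/2$), but the execution differs from the paper's. The paper argues at the bit level: since $|v^{(\alpha)}| < 10^{15}$ needs at most $50$ mantissa bits, the exact integer $v^{(\alpha)}$ has $m_{50}=m_{51}=m_{52}=0$, and then a three-case analysis (no error / error up / error down), illustrated with a figure, shows that the decimal $round$ always lands back on $v^{(\alpha)}$. You instead go straight to the numerical inequality $|v\otimes 10^\alpha - v^{(\alpha)}| \le \tfrac{1}{2}\,ULP(v\otimes 10^\alpha) < 10^{15}\cdot 2^{-52} < 1/2$, which is shorter and more in the style of standard numerical analysis. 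Two remarks: first, the paper's preliminaries only claim a \emph{full}-ULP error bound (not half-ULP), so to stay within their stated error model you should use $ULP$ rather than $\tfrac{1}{2}\,ULP$; this is harmless since $2\cdot 10^{15}\cdot 2^{-52}\approx 0.444 < 1/2$ still holds. Second, your concern about the representability of $10^\alpha$ is well placed, but note that the theorem as stated does not assume $\alpha\le 22$; the paper's own proof simply works under its blanket ``error at most one ULP'' model without revisiting this point, so your proof need not resolve it either beyond noting the assumption.
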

\begin{proof}
According to Theorem~\ref{the:keepbeta} (Decimal Significand Invariance), $DS(v^{(\alpha)}) = DS(v) = \beta \leq 15$, so we let $DF(v^{(\alpha)}) = \pm(d_{\beta-1}d_{\beta-2}\\...d_0.)_{10}$. 
If $v = 0$, $round (0 \otimes 10) = 0 = v^{(\alpha)}$.
If $v \neq 0$, we have: $1 \leq |v|^{(\alpha)} < 10^\beta \leq 10^{15} \Rightarrow log_2{1} \leq log_2{|v|^{(\alpha)}} < log_2{10^{15}} \Rightarrow 0 \leq \lceil log_2(|v|^{(\alpha)})\rceil \leq \lceil log_2{10^{15}} \rceil = 50$. That is, $v^{(\alpha)}$ can be represented by no more than 50 binary bits. Apart from the hidden bit, it can be represented by $m_1 \thicksim m_{49}$, i.e., the $m_{50} \thicksim m_{52}$ of its underlying layout are all supposed to be `0's. As shown in Fig.~\ref{fig:round}, suppose $m_t$ is the last bit that equals `1' in the underlying storage of $v^{(\alpha)}$, and $m_k$ is the first bit of the fractional part in $v^{(\alpha)}$, then $t < k \leq 50$. Due to the imprecision of floating-point arithmetic, there are three cases for the value of $m_{t}m_{t+1}...m_k...m_{51}m_{52}$ in $v \otimes 10^\alpha$, as shown in the right of Fig.~\ref{fig:round}. 
\textbf{Case~1: No Error}. In this case, $v \otimes 10^{\alpha} = v^{(\alpha)}$ (e.g., $v = 2.5$), and the decimal round operation imposes no effect on $v \otimes 10^\alpha$. Hence, $round(v \otimes 10^\alpha) = v^{(\alpha)}$. 
\textbf{Case~2: Error Up}. In this case, $|v| \otimes 10^{\alpha} > |v|^{(\alpha)}$ (e.g., $v = 1.11$), and the decimal round operation leaves out the bits after $m_{k-1}$ directly. Therefore, $round(v \otimes 10^\alpha) = v^{(\alpha)}$. 
\textbf{Case~3: Error Down}. In this case, $|v| \otimes 10^{\alpha} < |v|^{(\alpha)}$ (e.g., $v = 8.04$), and the decimal round operation not only removes the bits after $m_{k-1}$, but also increments $m_{k-1}$ by 1, finally propagating to $m_t$ and obtaining $v^{(\alpha)}$, i.e., $round(v \otimes 10^\alpha) = v^{(\alpha)}$.
\end{proof}

\begin{figure}[t]
  \centering
  \includegraphics[width=3.45in]{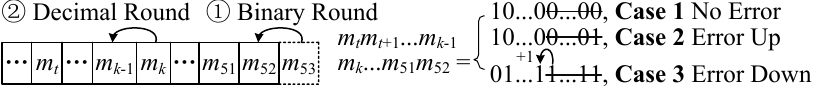}
  \vspace{-10pt}
  \caption{Illustration of Round.}
  \label{fig:round}
  \vspace{-10pt}
\end{figure}

If $\beta > 15$, Theorem~\ref{the:round} may not hold, since the binary round might affect the decimal round. The value 9.110900773177071 constitutes a counterexample, as in computers $round(9.110900773177071 \otimes 10^{15}) = 9110900773177072 \neq 9.110900773177071^{(15)}$.

\begin{theorem}[Conversion Recoverability]\label{the:recoverable}
Given $v$ with $\alpha = DP(v) \leq 22$ and $\beta = DS(v) \leq 15$, we have $v = round(v \otimes 10^\alpha) \div 10^\alpha$, i.e., the floating-point arithmetic would not introduce any error\footnote{The floating-point arithmetic error is defined as: the underlying storage of the mathematical result differs from that of the floating-point arithmetic result.}.
\end{theorem}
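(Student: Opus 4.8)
The plan is to peel the claimed identity apart into the single multiplication already handled by Theorem~\ref{the:round} and a single division, and then to show that that division is computed without rounding. First I would apply Theorem~\ref{the:round} (Conversion Correctness): the hypothesis $\beta = DS(v) \le 15$ is exactly what it needs, so $z := round(v \otimes 10^\alpha)$ equals the mathematically exact integer $v^{(\alpha)} = v \times 10^\alpha$. Hence it suffices to prove that the floating-point division $z \div 10^\alpha$ returns precisely the bit pattern of $v$ (the $v = 0$ case being immediate, with $\alpha = 0$).

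Next I would establish that both operands of that division are exactly representable doubles, so that the only possible error sits in the division itself. For $z$: by Theorem~\ref{the:keepbeta} (Decimal Significand Invariance) $DS(z) = DS(v) = \beta \le 15$, so $z$ is an integer with $|z| < 10^{\beta} \le 10^{15} < 2^{50} < 2^{53}$, and every integer of magnitude below $2^{53}$ is an exact double. For $10^\alpha$ with $\alpha \le 22$: write $10^\alpha = 2^\alpha \times 5^\alpha$; since $5^{22} = 2384185791015625 < 2^{52}$, the odd factor $5^\alpha$ occupies at most $52$ bits and therefore fits in the $53$-bit significand, while multiplication by $2^\alpha$ merely shifts the binary exponent, so $10^\alpha$ is stored exactly. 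This is precisely where the bound $\alpha \le 22$ is consumed, and it is tight: $5^{23} > 2^{53}$, so $10^{23}$ is already inexact.

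Finally I would invoke the IEEE 754 guarantee that division is correctly rounded: $z \div 10^\alpha$ is the double nearest to the true real number $z / 10^\alpha$. Because $z = v \times 10^\alpha$ holds exactly over the reals, that true quotient is exactly $v$, and since $v$ is itself a double, the nearest double to it is $v$, with no tie to break. Hence $z \div 10^\alpha = v$ exactly; and since $round(\cdot)$ was also exact (its argument rounds to the representable integer $v^{(\alpha)}$), the whole computation reproduces the bit pattern of $v$, which is the error notion fixed in the footnote. The step I expect to be the real obstacle is the exact-representability of $10^\alpha$: one has to be careful about the exact numeric threshold ($5^{22} < 2^{52}$ against $5^{23} > 2^{53}$) and argue properly that an odd integer below $2^{52}$ scaled by a power of two lands on a double, rather than loosely asserting that ``small powers of ten are exact''; pinning down why $\alpha \le 22$ (and not, say, $15$) is the right hypothesis is the crux.
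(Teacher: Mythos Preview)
Your proposal is correct and follows essentially the same route as the paper: invoke Theorem~\ref{the:round} to reduce the question to one division, then show $10^\alpha$ is exactly representable via the factorization $10^\alpha = 2^\alpha \cdot 5^\alpha$ and the bound $5^{22} < 2^{53}$ (the paper computes the equivalent threshold $\alpha \le 53/\log_2 5 \approx 22.83$). Your closing step---that correctly rounded division returns the nearest double to the true quotient $v$, which is $v$ itself---is in fact stated more carefully than the paper's version, which asserts somewhat loosely that exactness of the operands alone suffices.
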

\begin{proof}
According to IEEE 754 standard~\cite{kahan1996ieee}, whether a floating-point arithmetic introduce errors depends on whether the involved values can be fully represented by the floating-point storage (i.e., without involving binary round). 
{\bf Firstly}, based on Theorem~\ref{the:round} (Conversion Correctness), if $\beta \leq 15$, $round(v \otimes 10^\alpha) = v^{(\alpha)}$ is a representable value without any binary round error using the floating-point underlying layout. 
{\bf Secondly}, for $10^\alpha$, since $10^\alpha = 2^\alpha \times 5^\alpha$ and $2^\alpha$ can be represented by the exponent part of a floating-point value, hence we only investigate how many mantissa bits are required to represent $5^\alpha$. To let the mantissa bits store $5^\alpha$ without any binary round error, it must satisfy: $\lceil log_2{5^\alpha} \rceil \leq 53 \Leftrightarrow log_2{5^\alpha} \leq 53 \Leftrightarrow \alpha \leq 53 \div log_2{5} \approx 22.83$. That is, if $\alpha \leq 22$, $10^\alpha$ can be represented by the 52 mantissa bits and the 1 hidden bit without any binary round error.
\end{proof}

If $\beta \leq 15$ and $\alpha > 22$, Theorem~\ref{the:recoverable} may not hold. For example, suppose $v = 1.23456789876543 \times 10^{-9}$, then $\alpha = DP(v) = 23$, $\beta = DS(v) = 15$, but $round(1.23456789876543 \times 10^{-9} \otimes 10^{23}) \div 10^{23} = 1.2345678987654302 \times 10^{-9} \neq v$ in computers. Users may question why $1.11 \otimes 10^2$ will introduce error. This is because 1.11 cannot be fully represented by the floating-point storage, which violates the condition that all involved values are without binary round.

\sstitle{Summary} Considering both Theorem~\ref{the:round} and Theorem~\ref{the:recoverable}, given $v$ with $\beta = DS(v)$ and $\alpha = DP(v)$, \ul{if $\beta \leq 15$ and $\alpha \leq 22$, we convert $v$ to $round(v \otimes 10^\alpha)$ during compression, and recover $v = round(v \otimes$ $10^\alpha) \div 10^\alpha$ during decompression}. In Section~\ref{subsubsec:digittransform}, we will propose another conversion method for the case of $\beta > 15$ or $\alpha > 22$.

\subsubsection{Decimal Place Calculation}\label{subsubsec:dpcal}
% 现有方法怎么做的
%One na\"ive method to calculate the decimal place $\alpha = DP(v)$ is first transforming $v$ into a string representation, and then counting the number of digits after the decimal point by scanning the resulting string. This method runs slowly since data type transformation is quite expensive compared to the compression task itself. The Java language also provides a class {\tt BigDecimal} to get the decimal place of a floating-point value directly, but it performs even worse as the high-level class implements many complex but unnecessary operations for this task. The works~\cite{kuschewski2023btrblocks, Elf+} adopts a fast trial-and-error method for the calculation of decimal places, but it might introduce some errors due to the intrinsic imprecision of floating-point arithmetic, thereby degrading the compression ratio (as discussed in Section~\ref{sec:introduction}).
As discussed in Section~\ref{sec:introduction}, it is challenging to calculate the decimal place of a floating-point value. Existing methods may encounter the efficiency or accuracy problem. To this end, in this paper, we propose a novel method to quickly and accurately calculate the decimal places of floating-point values. The proposed method is based on the following theorem.

\begin{figure*}
  \centering
  \makebox[\textwidth][c]{\includegraphics[width=7.5in]{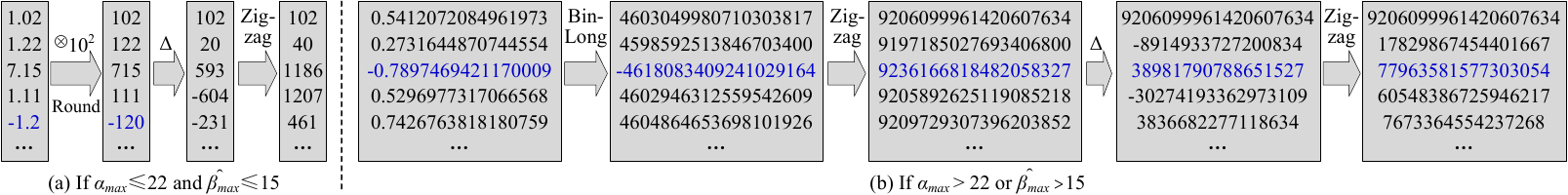}}
  \vspace{-15pt}
  \caption{Illustration of Digit Transformation.}
  \label{fig:IntegerConversion}
  \vspace{-10pt}
\end{figure*}

\begin{theorem}[Conversion Error Bound]\label{thm:conversionErrorBound}
Given $v$ with $\alpha = DP(v) \leq 22$ and $\beta = DS(v)\leq 15$, let $\varepsilon_i = |v \otimes 10^i - round(v \otimes 10^i)|$ and $\mu_i = |v \otimes 10^i| \times 2^{-52}$, for $i \in \{0, 1, ..., \alpha - 1\}$, we have $\varepsilon_i > \mu_i$ and $\varepsilon_\alpha \leq \mu_\alpha$.
\end{theorem}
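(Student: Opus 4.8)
The plan is to treat the two assertions separately and, in each, to compare the exact decimally scaled value $v^{(i)}=v\times 10^i$ with its machine counterpart $v\otimes 10^i$. I would rely on three facts. Since $i\le\alpha\le 22$, the factor $10^i$ is exactly representable (as shown in the proof of Theorem~\ref{the:recoverable}), so $v\otimes 10^i$ is merely the round-to-nearest rounding of the real number $v^{(i)}$, whence $|v\otimes 10^i-v^{(i)}|\le\tfrac12\,ULP(v\otimes 10^i)$; combined with the Discussion's inequality $ULP(w)\le|w|\times 2^{-52}$ this gives $|v\otimes 10^i-v^{(i)}|\le\tfrac12\mu_i$. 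And by Theorem~\ref{the:keepbeta}, $DS(v^{(i)})=\beta$, so Equ.~(\ref{equ:relation}) yields $\lfloor\log_{10}|v^{(i)}|\rfloor=\beta-1-(\alpha-i)$, hence $|v^{(i)}|<10^{\beta+i-\alpha}\le 10^{15+i-\alpha}$.

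The case $i=\alpha$ is immediate: Theorem~\ref{the:round} gives $round(v\otimes 10^\alpha)=v^{(\alpha)}$, so $\varepsilon_\alpha=|v\otimes 10^\alpha-v^{(\alpha)}|\le\tfrac12\,ULP(v\otimes 10^\alpha)\le\tfrac12\mu_\alpha\le\mu_\alpha$; the trivial subcases $v=0$ and $\alpha=0$ (where $v\otimes 10^\alpha$ is already an integer and both sides vanish) are dispatched by hand.

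For $0\le i\le\alpha-1$ I would argue in two steps. Step one: the exact scaled value lies well away from the integers. Because $DP(v^{(i)})=\alpha-i\ge 1$, the fractional part of $|v^{(i)}|$ is a positive integer multiple of $10^{i-\alpha}$ that is strictly less than $1$; since $1$ is also a multiple of $10^{i-\alpha}$, the distance from $v^{(i)}$ to the nearest integer is at least $10^{i-\alpha}$. With $|v\otimes 10^i-v^{(i)}|\le\tfrac12\mu_i$ and the triangle inequality, this forces $\varepsilon_i\ge 10^{i-\alpha}-\tfrac12\mu_i$. Step two: I would show $10^{i-\alpha}>\tfrac32\mu_i$. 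Bounding $|v\otimes 10^i|\le|v^{(i)}|/(1-2^{-53})<2\,|v^{(i)}|<2\cdot 10^{15+i-\alpha}$ reduces this to the numerical fact $3\cdot 10^{15}\cdot 2^{-52}<1$, which holds since $\log_2(3\cdot 10^{15})\approx 51.4<52$. Combining the two steps, $\varepsilon_i>\tfrac32\mu_i-\tfrac12\mu_i=\mu_i$.

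The step I expect to be delicate is this final numerical estimate. The bound $|v^{(i)}|<10^{15+i-\alpha}$ — and therefore the whole inequality — hinges on the hypothesis $\beta\le 15$, just as in Theorem~\ref{the:round}, and the margin between $10^{15}\cdot 2^{-52}$ and $1$ is under a factor of five, so I would avoid $O(\cdot)$ shortcuts and keep explicit track of the constants: the $\tfrac12$ from round-to-nearest and the $1/(1-2^{-53})$ incurred when passing between $|v\otimes 10^i|$ and $|v^{(i)}|$. I would also double-check that the ``distance $\ge 10^{i-\alpha}$'' claim genuinely uses $i<\alpha$ — it must break down at $i=\alpha$, which is precisely where the inequality in the theorem reverses.
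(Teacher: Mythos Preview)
Your argument is correct and considerably cleaner than the paper's own proof. The key difference is structural: you observe that since $10^i$ is exactly representable for $i\le 22$, the computed product $v\otimes 10^i$ is the single round-to-nearest of $v^{(i)}$, giving the sharp bound $|v\otimes 10^i-v^{(i)}|\le\tfrac12\mu_i$; the paper instead uses the coarser one-ULP bound from the Discussion in Section~\ref{subsec:fplayout}. More importantly, your triangle-inequality step exploits that $|v^{(i)}-N|\ge 10^{i-\alpha}$ holds for \emph{every} integer $N$, so it applies directly to $N=round(v\otimes 10^i)$ without ever asking whether $round(v\otimes 10^i)=round(v^{(i)})$. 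The paper, by contrast, writes $\varepsilon_i=|p+q+r|$ with $r=round(v^{(i)})-round(v\otimes 10^i)\in\{-1,0,1\}$ and then carries out a case analysis on $r$, preceded by a separate argument (its ``Thirdly'' step) that $v^{(i)}$ and $v\otimes 10^i$ lie in the same binade. Your route avoids both the binade argument and the three-way case split, at the cost of needing the half-ULP bound rather than the full-ULP bound --- which is exactly what the hypothesis $\alpha\le 22$ is there to guarantee, and you identify this correctly as the delicate numerical point. One cosmetic slip: for $\alpha=0$, $v\ne 0$, the quantity $\mu_\alpha=|v|\times 2^{-52}$ does not vanish (only $\varepsilon_\alpha$ does), but the inequality $\varepsilon_\alpha\le\mu_\alpha$ is still immediate.
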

\begin{proof}
If $v = 0$, $\alpha = 0$, $\varepsilon_\alpha = 0$ and $\mu_\alpha = 0$, i.e., Theorem~\ref{thm:conversionErrorBound} holds. Next, we prove that Theorem~\ref{thm:conversionErrorBound} holds if $v \neq 0$.

Since $\beta \leq 15$ and $\alpha \leq 22$, for $i \in \{0, 1, ..., \alpha\}$, $v \otimes 10^i$ and $v^{(i)}$ are normal numbers. $|v \otimes 10^i - v^{(i)}|$ means the error introduced by the floating-point arithmetic $v \otimes 10^i$ itself. Let $\mu'_i = ULP(|v \otimes 10^i|)$, as discussed in Section~\ref{subsec:fplayout}, we have \underline{$|v \otimes 10^i - v^{(i)}| \leq \mu'_i$}. For the normal number $v \otimes 10^i$, we have \underline{$\mu'_i \leq |v \otimes 10^i| \times 2^{-52} = \mu_i < 2 \times \mu'_i$}.

We first prove $\varepsilon_\alpha \leq \mu_\alpha$. Since $\beta \leq 15$, according to Theorem~\ref{the:round} (Conversion Correctness), we have $round(v \otimes 10^\alpha) = v^{(\alpha)}$. Therefore, $\varepsilon_{\alpha} = |v \otimes 10^\alpha - v^{(\alpha)}| \leq \mu_\alpha$.

We now prove that for $i \in \{0, 1, ..., \alpha - 1\}$, $\varepsilon_i > \mu_i$.
\textbf{Firstly}, because $i < \alpha$, $v^{(i)}$ is definitely not an integer. According to Theorem~\ref{the:keepbeta} (Decimal Place Invariance), $DS(v^{(i)}) = DS(v) = \beta$ , so we have $|v^{(i)}| = a \times 10^b$, where $1 \leq a < 10$ and $DS(a) = \beta$. We let $DF(a) = (d_0.d_{-1}...d_{-\beta + 1})_{10}$ where $d_{-\beta + 1} \neq $ `0', so we have $\underline{|v^{(i)} - round(v^{(i)})|} \geq min(d_{-\beta + 1}, 10 - d_{-\beta + 1}) \times 10^{b - \beta + 1} \geq 10^{b - \beta + 1} = 10 \times 10^{b} \times 10^{-\beta} > |v^{(i)}| \times 10^{-\beta} \geq \underline{|v^{(i)}| \times 10^{-15}}$.
\textbf{Secondly}, as discussed in Section~\ref{subsec:fplayout}, for the normal number $v^{(i)}$, we have $ULP(|v^{(i)}|) \leq |v^{(i)}| \times 2^{-52} \Rightarrow 1 / ULP(|v^{(i)}|) \geq 1/(|v^{(i)}| \times 2^{-52}) \Rightarrow |v^{(i)} - round(v^{(i)})| / ULP(|v^{(i)}|)$ $> 10^{-15} / 2^{-52} > 4.5 \Rightarrow \\ \underline{|v^{(i)} - round(v^{(i)})| > 4.5 \times ULP(|v^{(i)}|)}$. 
{\bf Thirdly}, suppose $e_{v^{(i)}} = \lfloor log_2|v^{(i)}|\rfloor + 1023$ and $e_{v \otimes 10^i} = \lfloor log_2|v \otimes 10^i|\rfloor + 1023$. For the normal number $v^{(i)}$, we have $2^{e_{v^{(i)}}} \leq |v^{(i)}| < 2^{e_{v^{(i)}} + 1}$ (see Section~\ref{subsec:fplayout}). Since the binary round operation always maps $v^{(i)}$ to the nearest representable floating-point value, if $|v^{(i)}| \in (2^{e_{v^{(i)}} + 1}-2^{e_{v^{(i)}} + 1 - 53}, 2^{e_{v^{(i)}} + 1}) \Leftrightarrow |v^{(i)} - 2^{e_{v^{(i)}} + 1}| < 2^{e_{v^{(i)}} + 1 - 53} \approx 2^{e_{v^{(i)}}} \times 2.22 \times 10^{-16}$, $|v^{(i)}|$ is rounded to the integer $2^{e_{v^{(i)}} + 1}$ (i.e., $v \otimes 10^i = 2^{e_{v^{(i)}} + 1}$, and we also have $round(|v^{(i)}|) = 2^{e_{v^{(i)}} + 1}$). In this case, $e_{v \otimes 10^i} = e_{v^{(i)}} + 1 \Rightarrow ULP(v \otimes 10^i) = 2 \times ULP(v^{(i)})$. But this case never exists for $\beta \leq 15$, because $|v^{(i)} - round(v^{(i)})| = |v^{(i)} - 2^{e_{v^{(i)}} + 1}| > |v^{(i)}| \times 10^{-15} \geq 2^{e_{v^{(i)}}} \times 10^{-15}$, which contradicts $|v^{(i)} - 2^{e_{v^{(i)}} + 1}| < 2^{e_{v^{(i)}} + 1 - 53}$. Therefore, $|v^{(i)}|$ must be mapped to the value with the same exponent with $|v^{(i)}|$, i.e., $e_{v \otimes 10^i} = e_{v^{(i)}} \Rightarrow ULP(|v \otimes 10^i|) = ULP(|v^{(i)}|) = \mu'_i \Rightarrow \underline{|v^{(i)} - round(v^{(i)})| > 4.5 \times \mu'_i}$. We also have $\underline{|v^{(i)} - round(v^{(i)})| \leq 0.5}$ according to the definition of decimal round.
{\bf Finally}, we let $p = v \otimes 10^i - v^{(i)}$, $q = v^{(i)} - round(v^{(i)})$, $r = round(v^{(i)}) - round(v \otimes 10^i)$. Therefore, we have \underline{$|p| \leq \mu'_i$}, \underline{$4.5\times \mu'_i < |q| \leq 0.5$}, \underline{$|r| = 0$ or $1$}, and \underline{$\varepsilon_i = |p + q + r|$}. 
Because $|p| \leq \mu'_i$ and $4.5\times \mu'_i < |q| \leq 0.5$, we have \underline{$0 \leq \mu'_i < 1 / 9$}. We consider three cases. 
{\bf Case~1}: $r = 0$. Considering $|p| \leq \mu'_i \Leftrightarrow -\mu'_i \leq p \leq \mu'_i$ and $|q| > 4.5 \times \mu'_i$, {\bf Case~1.1}: if $q \geq 0$, we have $q > 4.5 \times \mu'_i \Rightarrow p+q > 3.5 \times \mu'_i \geq 0 \Rightarrow \varepsilon_i = p + q > 2 \times \mu'_i > \mu_i$; {\bf Case~1.2}: if $q < 0$, we have $q < -4.5 \times \mu'_i \Rightarrow p + q < -3.5 \times \mu'_i \leq 0 \Rightarrow \varepsilon_i = - (p + q) > 2 \times \mu'_i > \mu_i$.
{\bf Case~2}: $r = 1$. {\bf Case~2.1}: if $q \geq 0$, as discussed in Case~1.1, we have $\varepsilon_i = p + q + 1 > 3.5 \times \mu'_i + 1 > 2 \times \mu'_i > \mu_i$; {\bf Case~2.2}: if $q < 0$, since $|q| \leq 0.5$, we have $-0.5 \leq q < 0$. Considering $|p| \leq \mu'_i$ and $0 \leq \mu'_i < 1/9$, we have $p + q + 1 \geq -\mu'_i + (-0.5) + 1 = 0.5 - \mu'_i > 0.5 - 1/9 = 7 / 18 \Rightarrow \varepsilon_i = |p + q + 1| > 2/9 > 2 \times \mu'_i > \mu_i$.
{\bf Case~3}: $r = -1$. Since $|p| \leq \mu'_i < 1/9$ and $|q| \leq 0.5$, we have $p + q < 1/9 + 0.5 = 11/18$ and $1 - 2 \times \mu'_i > 1 - 2 \times 1/9 = 7 / 9 \Rightarrow p + q < 1 - 2 \times \mu'_i \Rightarrow p + q - 1 < -2 \times \mu'_i \leq 0 \Rightarrow \varepsilon_i = |p + q - 1| > 2 \times \mu'_i > \mu_i$.
\end{proof}

\vspace{-10pt}
\begin{algorithm}[h]
\small
	\caption{$DPAndDSCalculation(v)$}\label{alg:DPAndDSCalculation}
	\lIf(\tcp*[h]{Special case}\label{alg:DPAndDSCalculation:specialCase}){$v = 0$}{\Return{$(0, 0)$}}
	$\alpha \leftarrow 0; \beta \leftarrow \alpha + \lfloor log_{10}|v|\rfloor + 1$; \tcp{Equ.~(\ref{equ:relation})}
	\While{$\beta \leq 15$ {\rm and} $\alpha \leq 22$}{
		\If{$|v \otimes 10^\alpha - round(v \otimes 10^\alpha)| \leq |v \otimes 10^\alpha| \times 2^{-52}$}{
            \If{$round(v \otimes 10^{\alpha}) \div 10^\alpha \neq v$}{
                $\alpha \leftarrow 23$; $\beta \leftarrow 16$\;
            }
            \Break\;
        }
		$\alpha \leftarrow \alpha + 1$; $\beta \leftarrow \beta + 1$\;
	}
	\Return{$(\alpha, \beta)$;}\tcp{We regard $\alpha > 22$ or $\beta > 15$ as Exception}
\end{algorithm}
\vspace{-10pt}

% Therefore, we can enumerate $i$ from 0 until $\varepsilon_i \leq \mu_i$, at which point $ i = \alpha$. Algorithm~\ref{alg:DPAndDSCalculation} shows the pseudo-code of the calculation of decimal place $\alpha$ and decimal significand $\beta$ of a floating-point value $v$. In Line~\ref{alg:DPAndDSCalculation:specialCase}, if $v = 0$, we set $\alpha = 0$ and $\beta = 0$, and return them directly. Otherwise, we enumerate $\alpha$ from 0, until we encounter the exception (i.e., $\beta > 15$ or $\alpha > 22$) or find the results. We design another method for the exception, so the exception will not affect the compression. The loop is executed at most $15$ times, leading to a time complexity of $\mathcal{O}(15)$. Its space complexity is $\mathcal{O}(1)$ obviously.

Therefore, we can enumerate $i$ from 0 until $\varepsilon_i \leq \mu_i$, at which point $ i = \alpha$. Algorithm~\ref{alg:DPAndDSCalculation} shows the pseudocode of the calculation of $\alpha$ and $\beta$ of $v$. In Line~\ref{alg:DPAndDSCalculation:specialCase}, if $v = 0$, we set $\alpha = \beta = 0$, and return them directly. Otherwise, we enumerate $\alpha$ from 0, until we encounter the exception (i.e., $\beta > 15$ or $\alpha > 22$) or find the results. We design another method for the exception, so the exception will not affect the compression. The loop is executed at most $15$ times, leading to a time complexity of $\mathcal{O}(15)$. Its space complexity is $\mathcal{O}(1)$ obviously.

\subsubsection{Digit Transformation}\label{subsubsec:digittransform}

%Once obtaining the values of $\alpha$ and $\beta$, where $\alpha \leq 22$ and $\beta \leq 15$, we can convert $v$ into $v \otimes 10^{\alpha}$. 
Intuitively, given a chunk of values $V = \{v_1, v_2, ..., v_n\}$, we can calculate their decimal places $\mathcal{A} = \{\alpha_1, \alpha_2, ..., \alpha_n\}$ and decimal significands $\mathcal{B} = \{\beta_1, \beta_2, ..., \beta_n\}$ based on Algorithm~\ref{alg:DPAndDSCalculation}. For $1 \leq i \leq n$, if $\alpha_i \leq 22$ and $\beta_i \leq 15$, we convert $v_i$ into $round(v_i \otimes 10^{\alpha_i})$, and record $\alpha_i$ for the recovery of $v_i$. However, if we record every $\alpha_i$, it would occupy too much space, which degrades the compression ratio. Since in a floating-point dataset, the decimal places of values tend to be similar~\cite{li2023elf, Elf+, li2025adaptive}, {\em Falcon} regards the decimal place of all $v_i$ as $\alpha_{max} = max\{\alpha_1, \alpha_2, ..., \alpha_n\}$, and record $\alpha_{max}$ only. Based on $\alpha_{max}$, we calculate a new decimal significand set $\hat{\mathcal{B}} = \{\hat{\beta_1}, \hat{\beta_2}, ..., \hat{\beta_n}\}$, where $\hat{\beta_i} = \alpha_{max} + \lfloor log_{10}{|v_i|} \rfloor + 1$ if $v_i \neq 0$, or $\hat{\beta_i} = 0$ if $v_i = 0$. Let $\hat{\beta_{max}} = max\{\hat{\beta_1}, \hat{\beta_2}, ..., \hat{\beta_n}\}$ (we also have $\hat{\beta_{max}} = \alpha_{max} + \lfloor log_{10}{v_{max}} \rfloor + 1$ if $v_{max} \neq 0$, or $\hat{\beta_{max}} = 0$ if $v_{max} = 0$, where $v_{max} = max\{|v_1|, |v_2|, ..., |v_n|\}$). Next, we elaborate on the digit transformation from two cases.

%Intuitively, given a chunk of floating-point values $V = \{v_1, v_2, ..., v_n\}$, we calculate their decimal places $\mathcal{A} = \{\alpha_1, \alpha_2, ..., \alpha_n\}$. For a uniform operation across the entire chunk that maximizes the effect of SIMT, we can treat the decimal place of all values $v_i$ as $\alpha_{max} = max(\mathcal{A})$. However, scaling a value $v$ by $10^{\alpha_{max}}$ might greatly increase its number of decimal significand, which potentially leads to a violation of Theorem~\ref{thm:conversionErrorBound}. To account for this, we define a \textbf{corrected significand} $\hat{\beta} = \beta + (\alpha_{max} - \alpha)$. To ensure the validity of the transformation for the entire chunk, we only need to consider the maximum corrected significand, $\hat{\beta}_{max}$. This can be efficiently computed using the value with the maximum magnitude in the chunk, $|v_{max}|$, as follows: \hat{\beta}_{max} = \alpha_{max} + \lfloor\log_{10}{|v_{max}|}\rfloor + 1 $. Next, we elaborate on the digit transformation from two cases.

%直观上给定一个浮点值块，我们可以计算他们的小数位数A，为了一个CHunk的统一操作，我们可以将所有的v_i的小数位数视为alphamax，但是由于选择了更大的alpha进行计算v \otimes 10^\alpha得到的实际beta=beta+alphamax - alpha，进而导致不满足定理4，因此我们还需要修正betahat =beta+alphamax - alpha = \alpha_{max} + \lfloor\log_{10}{|v|}\rfloor + 1$，同样的我们仅需要当前块中最大的betahat来表示整个块的转换情况，因此我们可以一次计算出betahatmax = \alpha_{max} + \lfloor\log_{10}{|v_{max}|}\rfloor + 1$.

{\bf Case~1}: If $\alpha_{max} \leq 22$ and $\hat{\beta_{max}} \leq 15$, as shown in Fig.~\ref{fig:IntegerConversion}(a), we first convert $V = \{v_1, v_2, ..., v_n\}$ into $G = \{g_1, g_2, ..., g_n\}$, where $g_i = round(v_i \otimes 10^{\alpha_{max}})$ for $1 \leq i \leq n$. The recoverability is guaranteed by the following theorem.

\begin{theorem}[Conversion Recoverability]
If $\alpha_{max} \leq 22$ and $\hat{\beta_{max}} \leq 15$, we have $v_i = g_i \div 10^{\alpha_{max}}$ for $1 \leq i \leq n$.
\end{theorem}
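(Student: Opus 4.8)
The statement is the chunk-wide, common-exponent analogue of Theorem~\ref{the:recoverable}: instead of scaling each $v_i$ by its own decimal place $\alpha_i$, we scale every value by the common maximum $\alpha_{max}$, and we must show the round trip stays lossless. The plan is to fix an arbitrary index $i$ and prove two facts: (i) $g_i := round(v_i \otimes 10^{\alpha_{max}})$ equals the mathematically scaled value $v_i^{(\alpha_{max})}$, which is an integer a double stores exactly; and (ii) $g_i \div 10^{\alpha_{max}} = v_i$. The case $v_i = 0$ is immediate ($g_i = 0 = 0 \div 10^{\alpha_{max}}$), so below we assume $v_i \neq 0$.

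The key preliminary — and the only genuinely new ingredient beyond Theorems~\ref{the:round} and~\ref{the:recoverable} — is that $v_i^{(\alpha_{max})}$ is a small integer. Since $\alpha_{max} = \max_j \alpha_j \ge \alpha_i = DP(v_i)$, multiplying $v_i$ by $10^{\alpha_{max}}$ shifts its decimal point past every fractional digit, so $v_i^{(\alpha_{max})}$ is an integer and $DP(v_i^{(\alpha_{max})}) = 0$. Note that Theorem~\ref{the:keepbeta} cannot be invoked here, since it only covers scalings by $i \le DP(v_i)$; instead we apply Equation~(\ref{equ:relation}) directly to $v_i^{(\alpha_{max})}$, which gives $DS(v_i^{(\alpha_{max})}) = 0 + \lfloor \log_{10}|v_i^{(\alpha_{max})}| \rfloor + 1 = \alpha_{max} + \lfloor \log_{10}|v_i| \rfloor + 1 = \hat{\beta_i} \le \hat{\beta_{max}} \le 15$. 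Hence $1 \le |v_i^{(\alpha_{max})}| < 10^{\hat{\beta_i}} \le 10^{15} < 2^{50}$, so $v_i^{(\alpha_{max})}$ occupies at most $50$ binary bits and is represented exactly in the $53$-bit significand of a double. (In particular $\alpha_i \le \alpha_{max} \le 22$ and $\beta_i = DS(v_i) \le \hat{\beta_i} \le 15$ for every $i$, so no per-value computation in the chunk hit the exception.)

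Given this, fact (i) follows by rerunning the argument in the proof of Theorem~\ref{the:round} with $v \leftarrow v_i$, $\alpha \leftarrow \alpha_{max}$, $\beta \leftarrow \hat{\beta_i}$: that argument uses only that $v^{(\alpha)}$ is a normal integer occupying at most $50$ bits, so that the floating-point error of $v \otimes 10^{\alpha}$ relative to $v^{(\alpha)}$ is on the order of $|v^{(\alpha)}| \cdot 2^{-52} < 2^{-2}$, well below $1/2$, and the decimal round snaps the result back to $v^{(\alpha)}$ (equivalently, the No Error / Error Up / Error Down case split carries over verbatim). Hence $g_i = v_i^{(\alpha_{max})}$. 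Fact (ii) then follows as in the proof of Theorem~\ref{the:recoverable}: $\alpha_{max} \le 22$ gives $\lceil \log_2 5^{\alpha_{max}} \rceil \le 53$, so $10^{\alpha_{max}} = 2^{\alpha_{max}} \cdot 5^{\alpha_{max}}$ is exactly representable; $g_i = v_i^{(\alpha_{max})}$ is exactly representable; and the exact real quotient $v_i^{(\alpha_{max})} / 10^{\alpha_{max}}$ equals $v_i$, which is itself a double, so IEEE 754's correctly-rounded division returns $v_i$ with no error. Since $i$ was arbitrary, this proves the theorem.

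I expect the main obstacle to be exactly the preliminary step isolated above — justifying rigorously that replacing each value's own $\alpha_i$ by the chunk-wide $\alpha_{max}$ keeps the scaled value an integer small enough to be exactly representable, i.e., that $\hat{\beta_{max}} \le 15$ (and not $\beta_{max} \le 15$) is the right hypothesis. This is the one place where Theorem~\ref{the:keepbeta} no longer applies and Equation~(\ref{equ:relation}) must be used in its place; once the $50$-bit bound is secured, the remainder is a near-verbatim replay of the proofs of Theorems~\ref{the:round} and~\ref{the:recoverable}.
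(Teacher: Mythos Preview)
Your proposal is correct and follows essentially the same approach as the paper: handle $v_i=0$ trivially, show $v_i^{(\alpha_{max})}$ is an integer with $DS(v_i^{(\alpha_{max})})=\hat{\beta_i}\le 15$, then reuse the proofs of Theorems~\ref{the:round} and~\ref{the:recoverable} verbatim for the rounding and the division. The only cosmetic difference is that you derive $DS(v_i^{(\alpha_{max})})=\hat{\beta_i}$ directly from Equation~(\ref{equ:relation}), whereas the paper routes through Theorem~\ref{the:keepbeta} to get $DS(v_i^{(\alpha_i)})=\beta_i$ and then adds the $\alpha_{max}-\alpha_i$ trailing zeros; both arrive at the same bound.
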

\begin{proof}
%For any $i \in [1, k]$, we can easily know $\alpha_{i} \leq \alpha_{max} \leq 22$ and $\beta_{i} \leq \hat{\beta_{max}} \leq 15$.
%
If $v_i = 0$, we have $g_i = 0$, i.e., $v_i = g_i \div 10^{\alpha_{max}}$.

If $v_i \neq 0$, as $\alpha_i \leq \alpha_{max}$, $v_i^{(\alpha_{max})}$ must be an integer. Since $\hat{\beta_i} = \alpha_{max} + \lfloor log_{10}{|v_i|} \rfloor + 1$ and $\beta_i = \alpha_i + \lfloor log_{10}{|v_i|} \rfloor + 1$, we have $\hat{\beta_i} - \beta_i = \alpha_{max} - \alpha_i$. Because $DS(v_i^{(\alpha_i)}) = DS(v_i) = \beta_i$, and $v_i^{(\alpha_{max})} = v_i^{(\alpha_i)} \times 10^{\alpha_{max} - \alpha_i}$ (i.e., $v_i^{(\alpha_{max})}$ shifts the decimal point of $v^{(\alpha_i)}$ to the right by $\alpha_{max} - \alpha_i$ positions), we have $DS(v_i^{(\alpha_{max})}) = \beta_i + (\alpha_{max} - \alpha_i) = \hat{\beta_i} \leq 15$. By regarding $v_i^{(\alpha_{max})}$ as $v^{(\alpha)}$ in the proof of Theorem~\ref{the:round}, we can prove that $g_i = round(v_i \otimes 10^{\alpha_{max}}) = v_i^{(\alpha_{max})}$. Since $g_i$ and $10^{\alpha_{max}}$ are both representable values without any binary round error, as discussed in Theorem~\ref{the:recoverable}, we have $v_i = g_i \div 10^{\alpha_{max}}$.
\end{proof}

For example, as shown in Fig.~\ref{fig:IntegerConversion}(a), suppose $\alpha_{max} = 2$, although $DP(-1.2) = 1$, we still convert it into -120 by $round(-1.2 \otimes 10^2)$.

We further transform $G = \{g_1, g_2, ..., g_n\}$ into $Z = \{z_1, z_2, ..., z_n\}$, where $z_1 = g_1$ and $z_i = Zigzag(g_i - g_{i-1})$ for $i \in (1, n]$. Here, if $1 < i \leq n$, $Zigzag(x_i) = (x_i << 1) \oplus (x_i >> 63)$ is the Zigzag encoding~\cite{Zigzag2001} function to ensure $z_i$ to be positive, where $\oplus$ is the bitwise XORing operation, and $x_i = g_i - g_{i - 1}$ (a.k.a. $\Delta$ operation) is based on the assumption that two consecutive values in a dataset usually do not vary significantly, thereby making $z_i$ tend to be small. By doing this, we make each $z_i$ contain as many leading zeros as possible to facilitate our subsequent encoding.

{\bf Case~2}:~If $\alpha_{max} > 22$ or $\hat{\beta_{max}} > 15$, since $v_i$ may not be able to recover from $round(v_i \otimes 10^{\alpha_{max}})$, we propose to adopt another conversion method. Specifically, as shown in Fig.~\ref{fig:IntegerConversion}(b), we first convert $V = \{v_1, v_2, ..., v_n\}$ into $G = \{g_1, g_2, ..., g_n\}$, where $g_i = Zigzag(BinLong(v_i))$, and then transform $G = \{g_1, g_2, ..., g_n\}$ into $Z = \{z_1, z_2, ..., z_n\}$ as the Case 1 does, i.e., $z_1 = g_1$ and $z_i = Zigzag(g_i - g_{i-1})$ for $i \in (1, n]$. Here, $BinLong(v_i)$ means that we interpret the underlying floating-point bits of $v_i$ as a long integer directly (e.g., {\tt Double.doubleToLongBits} in Java). 

% delta操作是否可能溢出？为什么case 1只用一个zigzag，而case 2用两个zigzag？
\sstitle{Discussion} In Case 2, we perform an extra Zigzag operation when getting $g_i$. Let $g'_i = BinLong(v_i)$ and $Zigzag(g'_i) \approx 2 \times |g'_i|$. In this case, if there are two consecutive values $v_t$ and $v_{t - 1}$ with different signs, since $|g'_t|$ and $|g'_{t-1}|$ are usually very large and $|g'_t| \approx |g'_{t-1}|$, $z'_t = Zigzag(g'_t - g'_{t-1}) \approx 4 \times |g'_t|$ will result in an extremely large value that dominates $Z$, i.e., $z'_t$ is the largest value in $Z$. Reducing $z_t$ plays a vital role in high-performance encoding (detailed in Section~\ref{subsec:encoding}). We can roughly prove that if $|g'_t| < |g'_{t-1}| < 3 \times |g'_t|$ or $g'_{t-1} < |g'_{t}| < 3 \times |g'_{t-1}|$, $z_t = Zigzag(Zigzag(g'_t) - Zigzag( g'_{t-1}))$ is usually smaller than $z'_t$ (i.e., $z'_t > z_t \xLeftrightarrow{\approx} 2 \times |g'_t - g'_{t-1}| > 4 \times |g'_t + g'_{t - 1}| \Leftrightarrow |g'_t| + |g'_{t-1}| > 2 \times |g'_t + g'_{t - 1}| \Leftrightarrow |g'_t| < |g'_{t-1}| < 3 \times |g'_t|$ or $g'_{t-1} < |g'_{t}| < 3 \times |g'_{t-1}|$). 
For example, considering the third value in Fig.~\ref{fig:IntegerConversion}(b), if we do not perform the first Zigzag operation, we will get $z'_3 =$ 18,433,351,846,175,465,127, which is much larger than 77,963,581,577,303,054. %However, if $v_t$ and $v_{t-1}$ have the same sign, $z'_t \approx 2 \times z_t$. One optional method is to first check if all $v_i \in V$ share the same sign. If so, the first Zigzag operation is omitted; otherwise, it is performed. But different execution branches will cause warp divergence in GPUs, severely compromising the throughput. Thus, we trade a marginal portion of compression ratio for a substantial gain in throughput.

In Case 1, we do not perform the extra Zigzag operation when getting $g_i$, as $g_i = round(v_i \otimes 10^{\alpha_{max}})$ is generally much smaller compared to that in Case~2. Even if $v_t$ and $v_{t-1}$ have different signs, $z_t = Zigzag(g_t - g_{t - 1})$ usually does not dominate $Z$. Inversely, the extra Zigzag operation usually increases the maximum value in $Z$.

Because the values in a dataset tend to have similar decimal places and decimal significands, when compressing a single dataset, all threads in GPUs tend to handle the same case. That is, the two distinct processing branches result in negligible warp divergence.

\sstitle{Summary} Combining Case 1 and Case 2 together, we have:
\begin{equation}\label{equ:z1}
g_i = \left\{  
	\begin{array}{ll}
		round(v_i \otimes 10^{\alpha_{max}}) \quad {\rm if}\ \alpha_{max} \leq 22 \ {\rm and}\ \hat{\beta_{max}} \leq 15\\
		Zigzag(BinLong(v_i)) \quad {\rm if}\ \alpha_{max} > 22 \ {\rm or}\ \hat{\beta_{max}} > 15
	\end{array}
\right. 
\end{equation}%
\begin{equation}\label{equ:z2}
z_i = \left\{  
	\begin{array}{ll}
		g_1 \quad {\rm if}\ i = 1\\
		Zigzag(g_i - g_{i-1}) \quad {\rm if}\ i > 1
	\end{array}
\right. 
\end{equation}

\subsection{Adaptive Bit Plane Encoding}\label{subsec:encoding}

\begin{figure}[t]
  \centering
  \includegraphics[width=3.4in]{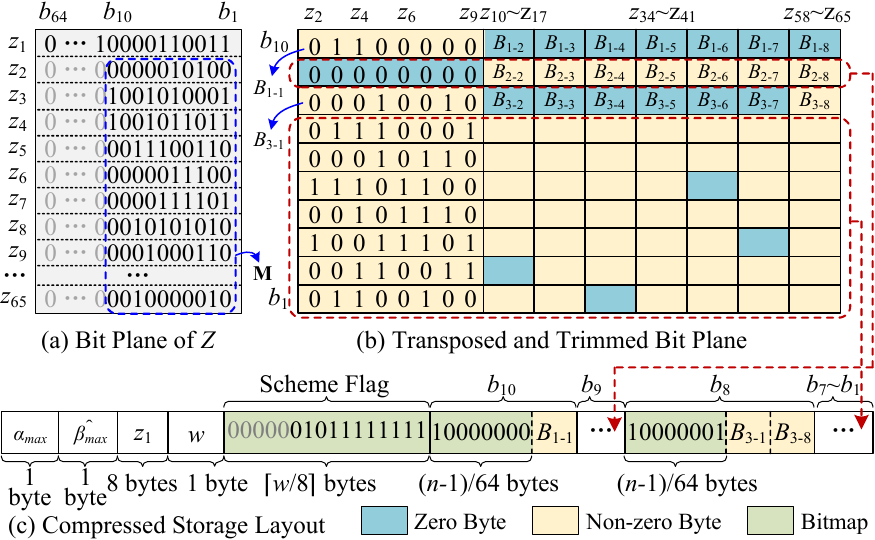}
  \vspace{-15pt}
  \caption{Adaptive Bit Plane Encoding ($n = 65$ in the example).}
  \label{fig:BitPlaneEncoding}
  \vspace{-10pt}
\end{figure}

In this step, we encode $Z=\{z_1, z_2, ..., z_n\}$, where $z_i \in Z$ is a long integer that occupies 64 bits, and $n$ is the number of values in a chunk. Since $z_i \in Z$ is positive and usually small, $z_i$ tends to contain many leading zeros. The number of leading zeros of $z_i$ is $lead_i = 64 - (\lfloor log_2{z_i} \rfloor + 1)$. We let $lead_{min} = min\{lead_2, lead_3, ..., lead_{65}\}$. Thus, we trim the first $lead_{min}$ leading zeros of $z_j \in Z\setminus\{z_1\}$, forming a new bit plane matrix $\mathbf{M} \in \{0, 1\}^{(n-1)\times w}$, where $w = 64 - lead_{min}$ is the {\bf bit width}. Note that we do not consider $z_1$ here because we store $z_1$ as it is. Fig.~\ref{fig:BitPlaneEncoding}(a) gives an example, where $n = 65$ and $w = 10$.

As each row of $\mathbf{M}$ may not be byte-aligned, we get the transposed bit plane matrix $\mathbf{M}^{\rm T} \in \{0, 1\}^{w \times (n-1)}$, since we can ensure each row of $\mathbf{M}^{\rm T}$ is byte-aligned by setting $n = k \times 64 + 1$, where $k$ is a positive integer (in our implementation, $n = 1025$). Suppose $B_{i\text{-}j}$ is the $j$-th byte in the $i$-th row of $\mathbf{M}^{\rm T}$. As shown in Fig.~\ref{fig:BitPlaneEncoding}(b), if all bits in $B_{i\text{-}j}$ are 0s, we call $B_{i\text{-}j}$ {\bf zero byte} (e.g., $B_{2\text{-}1}$ and $B_{3\text{-}2}$ marked in blue); otherwise, we call it {\bf non-zero byte} (e.g., $B_{1\text{-}1}$ and $B_{2\text{-}2}$ marked in yellow). We devise two storage schemes for each row of $\mathbf{M}^{\rm T}$. 1)~{\bf Sparse Storage}. We utilize a bitmap with a size of $(n - 1) / 64$ bytes to indicate the byte types in the row (`1' represents non-zero byte while `0' stands for zero byte), sequentially followed by the non-zero bytes themselves in the row. As shown in Fig.~\ref{fig:BitPlaneEncoding}(c), the bitmap of the first row (i.e., $b_{10}$) of $\mathbf{M}^{\rm T}$ is ``$10000000$'', representing there is only one non-zero byte (i.e., the first byte). 2)~{\bf Dense Storage}. We store the bytes in the row as they are. Suppose $\lambda_i$ be the number of zero bytes in the $i$-th row, then the cost of Sparse Storage is $(n-1) / 64 + ((n - 1) / 8 - \lambda_i)$ bytes, while the cost of Dense Storage is $(n - 1) / 8$ bytes. We propose an adaptive storage scheme selection with the minimum storage cost. Specifically, if $(n-1) / 64 + ((n - 1) / 8 - \lambda_i) < (n - 1) / 8 \Leftrightarrow \lambda_i > (n - 1) / 64$, we adopt the Sparse Storage scheme; otherwise, we adopt the Dense Storage scheme. 
The advantages of our design are twofold. First, due to the existence of outliers, there are commonly many zero bytes in the first few rows. Our Sparse Storage scheme can well handle these outliers and achieve good compression ratio. Second, the zero bytes mainly appear in the first few rows, i.e., the threads in GPUs do not execute branches randomly, which alleviates the warp divergence issue.

Fig.~\ref{fig:BitPlaneEncoding}(c) presents the underlying compressed storage layout. We first store $\alpha_{max}$ and $\hat{\beta_{max}}$ with 1 byte, respectively, and then store $z_1$ with 8 bytes, followed by 1 byte of $w$. Because $w \leq 64$, 1 byte (i.e., 8 bits) is enough for $w$. The following $\lceil w / 8 \rceil$ bytes are the storage scheme flags, where the last $w$ bits respectively indicate the adopted storage scheme of each row of $\mathbf{M}^{\rm T}$ (`0' represents the sparse storage scheme while `1' is the dense storage scheme). The actual compressed data of each row are stored in sequence. %Depending on the number of zero bytes in each row, it is either in sparse storage or in dense storage.

\subsection{Compressed Size Sync and Chunk Writing}\label{subsec:sync}

% 着重介绍同步后offset的计算，两步，本地计算，全局计算，looking back机制
% 由于不同chunk的数据不同，压缩后的数据大小也不一样。某个线程压缩后的数据存储在GPU的本地内存中，我们需将这些大小不同的压缩后的数据按顺序合并成一个流，并存储到外存中。最直接的办法是在全局内存中为每个线程中预留一部分空间，每个线程将本线程压缩后的数据写入到对应的预留空间中。但这有几个问题：1）预留空间大小难以确定。若预留空间过小，则无法装下压缩后的数据；若预留空间过大，则导致GPU全局空间浪费。2）需要额外的合并操作，这个操作只能是线性的，从而影响压缩吞吐量。为此，本文提出让各线程直接写入到全局内存的特定位置，而无需额外的合并操作。
%Since data varies across different chunks, the sizes of the compressed data also differ. 
The compressed data from each thread are stored in the local memory of the GPU, and need to be merged into a single stream in the global memory. One basic solution is to reserve a portion of the global memory for each thread, allowing each thread to copy its compressed data into its corresponding space. However, this solution presents two issues: 1)~Determining the size of the reserved space is challenging. If the reserved space is too small, it cannot accommodate the compressed data; if it is too large, it leads to wasteful utilization of GPU global memory. 2)~An additional merging operation is required, thereby affecting compression throughput. %To this end, this paper proposes a method where each thread directly writes to specific locations in the global memory without necessitating an additional merging operation.

%\begin{figure}[t]
%  \centering
%  \includegraphics[width=2.8in]{}
%  \caption{Illustration of Compressed Size Sync.}
%  \label{fig:CompressedSizeSync}
%\end{figure}

Therefore, this paper proposes to first synchronize the compressed chunk sizes among threads. Based on the compressed chunk sizes, we calculate the offset of each compressed chunk data, and thus each thread is able to copy its compressed chunk data directly into the right address in the global memory. We employ the decoupled look-back strategy~\cite{huang2024cuszp2, merrill2016decoupled} to accelerate this process.

When all threads have finished writing their chunks, the compression of one data batch is complete  (i.e., the CmpKernel in Fig.~\ref{fig:DifferentSchedulers}). At this point, the compressed data residing in the global memory consists of two components: a compressed chunk size array and the actual compressed chunk data, which form the complete compressed data of the batch.

\subsection{Summary of {\em Falcon} Compression}
Algorithm~\ref{alg:CmpKernelAsync} depicts the pseudocode of {\em Falcon}'s CmpKernel (which is called by Algorithm~\ref{alg:EventDrivenScheduler}). Most lines of Algorithm~\ref{alg:CmpKernelAsync} are self-explanatory. Note that the code segments in Lines~2-15 are executed in parallel across different GPU threads. 

\begin{algorithm}[htb]
\small
	\caption{$CmpKernelAsync(gInArr, gOutArr, streamId)$}\label{alg:CmpKernelAsync}
	Divide $gInArr$ into multiple chunks, each containing $n$ values\;
	\Parallel{{\rm each GPU thread processes a chunk} $V = \{v_1, v_2, ..., v_n\}$}{
		\tcc{Integer Conversion}
		$\alpha_{max} \leftarrow max\{DP(v_1), DP(v_2), ..., DP(v_n)\}$\;
		$v_{max} \leftarrow max\{|v_1|, |v_2|, ..., |v_n|\}$\;
		\lIf{$v_{max} = 0$}{$\hat{\beta_{max}} \leftarrow 0$}
		\lElse{$\hat{\beta_{max}} \leftarrow \alpha_{max} + \lfloor log_{10}{v_{max}} \rfloor + 1$}
		Calculate $Z = \{z_1, z_2, ..., z_n\}$ based on Equ.~(\ref{equ:z1}) and Equ.~(\ref{equ:z2})\;
		\tcc{Adaptive Bit Plane Encoding}
		Construct $\mathbf{M}$ based on $Z \setminus \{z_1\}$, and then get $\mathbf{M}^{\rm T}$\;
		\For{{\rm each row} $row_i$ {\rm in} $\mathbf{M}^{\rm T}$}{
			$\lambda_i \leftarrow$ the number of zero bytes in $row_i$\;
			\lIf{$\lambda_i > (n-1)/64$}{Encode $row_i$ with Sparse Storage}
			\lElse{Encode $row_i$ with Sparse Storage}
		}
		Construct the compressed chunk data shown in Fig.~\ref{fig:BitPlaneEncoding}\;
		\tcc{Compressed Size Sync and Chunk Writing}
		Synchronize the compressed chunk size and get result offset\;
		Write the compressed chunk data into $gOutArr$ with the offset\;
	}
	Copy the compressed chunk size array into $gOutArr$\;
\end{algorithm}

\section{{\em Falcon} Decompresion}\label{sec:decompression}

As shown in Fig.~\ref{fig:framework}(b), {\em Falcon} decompression is essentially the inverse of its compression. Specifically, the compressed chunk size array is first read directly from the compressed data, based on which the compressed chunk offset array is then computed and broadcast in the global memory, facilitating parallel access by GPU threads. Subsequently, GPU threads concurrently read the compressed chunk data and perform adaptive bit-plane decoding to reconstruct a set of integer values. The original floating-point values are then recovered by applying the inverse operations of Equ.~(\ref{equ:z1}) and Equ.~(\ref{equ:z2}). Finally, the decompressed floating-point values are written in parallel to their corresponding locations in the global memory. Note that since the decompressed size of all chunks is identical and known a priori, we adopt Pre-Allocation Scheduler shown in Fig.~\ref{fig:DifferentSchedulers}(a) during decompression.

\section{Experimental Evaluation}\label{sec:experiment}
\subsection{Datasets and Experiment Settings}\label{subsec:setup}

% \begin{table}[h]
%     \centering
%     \caption{Hardware and Software Specifications.}
%     \label{tab:hardware_spec}
%     \begin{tabular}{ll}
%         \toprule
%         \textbf{Component} & \textbf{Specification} \\
%         \midrule
%         GPU & NVIDIA~GeForce~RTX\texttrademark~5080 \\
%         CPU & Intel\textsuperscript{\textregistered}~Core\texttrademark~i7--14700KF  \\
%         Host Memory & 32 GB DDR4 \\
%         \midrule
%         Operating System & Ubuntu 24.04.2 LTS \\
%         CUDA Toolkit & 12.9 \\
%         NVIDIA Driver & 575.51.03 \\
%         \bottomrule
%     \end{tabular}
% \end{table}

\subsubsection{Datasets.} Our evaluation is performed on 12 diverse floating-point datasets, the details of which are summarized in Table~\ref{tab:datasets}. Specifically, we select 9 large-volume real-world datasets covering financial, geospatial and scientific domains. Furthermore, we include 3 synthetic datasets generated by the Time Series Benchmark Suite (TSBS)~\cite{serf_tsbs_2023} and NYX~\cite{NYX_Almgren_2013} to assess the performance on controlled and simulated data patterns from IoT, system monitoring, and cosmological simulations, respectively.

\begin{table}[t] % 使用 [t] 或 [h] 尝试让表格置于页面顶部或当前位置
\centering % 居中表格
\caption{Details of Datasets.}
\label{tab:datasets}
\setlength{\tabcolsep}{1pt} % Slightly reduce column spacing for a better fit
\resizebox{3.45in}{!}{ % % 关键！自动缩放表格以适应单栏宽度
\begin{tabular}{@{}lcccl@{}}
\toprule
\textbf{Dataset}       & \textbf{\#Records} &\textbf{${\beta}_{avg}$} &\textbf{${\beta}_{max}$} & \textbf{Description} \\ 
\midrule
\multicolumn{5}{l}{\textit{Real-World Datasets}} \\ % 视觉分组：子标题
\midrule
Air-pressure (AP)~\cite{neon_pressure_2022}    & 13356090      & 8   &8    & Barometric pressure        \\
City-temp (CT)~\cite{Kaggle_city_temp2019}        & 57727468      & 3   &5    & Main cities' temperature        \\
Gas-sensor (GS)~\cite{gas_sensor_2015}                  & 67332177    & 6     &7    & Ethylene CO sensor data        \\
JaneStreet-market (JM)~\cite{JaneStreetKaggle}       & 71935770     &8    &9    & Jane street market prices        \\
Stocks-price (SP)~\cite{elf_stocks_2023}      & 11321684     & 4    &6    & Stock price in 3 countries         \\
Solar-wind (SW)~\cite{solar_wind_Kaggle}       & 104139947     & 3   &6    & NASA solar wind data        \\ 
Taxi-amount (TA)~\cite{NYCTaxi2021}      & 47248845     & 3   & 8    & Taxi order fare amount        \\
Taxi-position (TP)~\cite{NYCTaxi2021}      & 47248844     & 16  & 17    & Taxi order pickup position         \\

Wind-speed (WS)~\cite{neon_windspeed_2022}       & 20860255     & 3   & 7     & Wind speed data        \\ \midrule
\multicolumn{5}{l}{\textit{Synthetic Datasets}} \\ % 视觉分组：子标题
\midrule
NYX~\cite{NYX_Almgren_2013}      & 9437184       &8   &9   & Cosmological hydrodynamics        \\  
Sim-Memory (SM)~\cite{serf_tsbs_2023}       & 51840000      & 10  &11   & Simulated memory metrics    \\     
Sim-Truck (ST)~\cite{serf_tsbs_2023}       & 34422759      & 8  &9   & Simulated  truck position       \\      
\bottomrule
\end{tabular}
} % 结束 resizebox
\end{table}

\subsubsection{Baselines.} 
We compare \textit{Falcon} against 10 representative competitors, including 3 state-of-the-art CPU-based floating-point lossless compressors (i.e.,  Elf~\cite{li2023elf}, ALP~\cite{afroozeh2023alp}, and Elf*~\cite{li2025adaptive}), 1 GPU-based lossless compressor designed specifically for scientific data (i.e., ndzip~\cite{9910073, knorr2021ndzip}), and 4 GPU-based general-purpose compression algorithms (i.e., Bitcomp, LZ4, Snappy, GDeflate) that are extracted from NVIDIA's nvCOMP~\cite{nvcomp2023} library. To enable a fair assessment on the same hardware, we also transplant the most 2 state-of-the-art CPU-based floating-point compressors to the GPU (i.e., GPU:Elf*, GPU:ALP), allowing us to evaluate the core algorithmic innovations of {\em Falcon} under equitable conditions. The pioneering GPU-based floating-point compression methods~\cite{o2011floating, yang2015mpc} are not compared, because they have been proven to perform worse than the more advanced ndzip~\cite{9910073, knorr2021ndzip}. The method~\cite{jamalidinan2025floating} is also excluded from our comparison because it contains a clustering-based preprocessing step, followed by GPU-enabled generic compression. The clustering-based preprocessing step is not amenable to GPU acceleration.

\subsubsection{Metrics.} We evaluate the performance by 3 metrics: 1)~\textbf{compression ratio}, defined as the ratio of the compressed data size to the original one, 2)~\textbf{compression throughput (GB/s)}, which means the original data size divided by the total processing time, and 3)~\textbf{decompression throughput (GB/s)}, which indicates the decompressed data size divided by the total processing time.

\begin{table}[t!]
  \centering
  \captionsetup{skip=5pt} %设置表头和表格之间的间距
  \caption{Compression Ratio (The best results are in \textbf{bold}, and the suboptimal results are in underlined. The same below).}
  \label{tab:compression_ratio}
  \setlength{\tabcolsep}{1.6pt} % Slightly reduce column spacing for a better fit
  \resizebox{3.45in}{!}{  
  \begin{tabular}{@{}c|cccccccc|ccc@{}}
\bottomrule
\multirow{2}{*}{\textbf{\makecell[c]{Data\\Set}}} & \multicolumn{8}{c|}{\textbf{GPU}} & \multicolumn{3}{c}{\textbf{CPU}}                                         \\ \cline{2-12} 
       & {\em Falcon}  & ndzip & Lz4  & Bitc. & GDef. & Snap. & Elf* & ALP  & Elf  & Elf*      & ALP\\ \hline
\textbf{AP}                       & \textbf{0.139} & 0.990 & 0.376 & 0.687 & 0.862 & 0.386 & 0.151 & 0.258 & 0.163 & \underline{0.150} & 0.257 \\
\textbf{CT}                       & \textbf{0.096} & 0.989 & 0.388 & 0.882 & 0.472 & 0.345 & 0.158 & \underline{0.133} & 0.189 & 0.160 & \underline{0.133} \\
\textbf{GS}                       & \textbf{0.204} & 1.003 & 0.558 & 1.001 & 0.743 & 0.524 & 0.267 & 0.220 & 0.312 & 0.280 & \underline{0.219} \\
\textbf{JM}                       & 0.550 & 0.990 & 0.730 & 0.974 & 0.907 & 0.745 & 0.584 & \textbf{0.529} & 0.637 & 0.585 & \underline{0.536} \\
\textbf{SP}                       & \textbf{0.096} & 0.988 & 0.424 & 0.899 & 0.635 & 0.403 & \underline{0.144} & 0.161 & 0.174 & 0.157 & 0.161 \\
\textbf{SW}                       & \textbf{0.150} & 1.000 & 0.468 & 1.000 & 0.728 & 0.487 & 0.241 & 0.189 & 0.269 & 0.241 & \underline{0.188} \\
\textbf{TA}                       & \textbf{0.207} & 1.000 & 0.414 & 0.977 & 0.590 & 0.401 & 0.286 & 0.223 & 0.318 & 0.282 & \underline{0.222} \\
\textbf{TP}                       & 0.765 & 1.000 & 0.607 & 0.998 & \textbf{0.536} & 0.569 & 0.538 & 0.830 & 0.630 & \underline{0.537} & 0.830 \\
\textbf{WS}                       & \textbf{0.148} & 0.998 & 0.430 & 0.957 & 0.685 & 0.448 & 0.229 & \underline{0.175} & 0.258 & 0.231 & \underline{0.175} \\
\textbf{NYX}                      & 0.484 & 0.996 & 0.936 & 1.001 & 0.946 & 0.987 & 0.491 & \underline{0.480} & 0.539 & 0.493 & \textbf{0.479} \\
\textbf{SM}                       & \underline{0.436} & 1.000 & 0.943 & 1.001 & 0.939 & 0.979 & 0.529 & \textbf{0.433} & 0.589 & 0.527 & \textbf{0.433} \\ 
\textbf{ST}                       & \underline{0.312} & 1.000 & 0.668 & 0.987 & 0.791 & 0.760 & 0.424 & \textbf{0.310} & 0.478 & 0.422 & \textbf{0.310} \\ \hline
\textbf{Avg.}                     & \textbf{0.299} & 0.996 & 0.579 & 0.947 & 0.736 & 0.586 & 0.337 & \underline{0.329} & 0.380 & 0.339 & \underline{0.329} \\ 
\toprule
\end{tabular}
}
\end{table}

\subsubsection{Settings.}
The number of streams in the pipeline is set to 16 by default. We fix the number of floating-point values in a data chunk to be 1025, which is similar to that of Elf~\cite{li2023elf} and ALP~\cite{afroozeh2023alp}. The default batch size (i.e., the number of floating-point values in a batch) is set to $1025 \times 1024 \times 4$. The general compression competitors are set to their default compression level. All experiments are conducted on a workstation equipped with an NVIDIA GeForce RTX\textsuperscript{TM} 5080 GPU, an Intel\textsuperscript{\textregistered} Core\textsuperscript{TM} i7-14700KF CPU, and 32~GB of DDR4 host memory. The software environment consists of Ubuntu 24.04.2 LTS, running with CUDA Toolkit~12.9 and NVIDIA Driver~575.51.03.

\subsection{Overall Performance Comparison}\label{subsec:overallcmp}
%The evaluation of a compression algorithm in the context of modern data systems, especially those accelerated by GPUs, transcends a single metric. While compression ratio remains a fundamental measure of storage efficiency, the explosive growth of data volumes has made processing throughput the primary bottleneck. Therefore, we assess performance through a dual lens: first, by establishing the baseline \textbf{compression radio}, and second, by measuring the practical, \textbf{end-to-end throughput}, which we argue is the decisive factor for large-scale GPU-accelerated applications.

\subsubsection{Compression Ratio.} With regard to the compression ratio, we have the following observations from Table~\ref{tab:compression_ratio}.

{\em Falcon} exhibits improved compression performance compared to state-of-the-art CPU-based floating-point data compression algorithms. On average, {\em Falcon} achieves a relative compression ratio improvement of $(0.38-0.299)/0.38=21\%$ over Elf, 11\% over Elf* and 9\% over ALP. This advantage is also observed when comparing {\em Falcon} to the GPU-ported versions of Elf* and ALP, demonstrating the effectiveness of {\em Falcon}'s GPU implementation. We note that, on the synthetic datasets (i.e., NYX, SM and ST), ALP achieves a slightly better compression ratio than {\em Falcon}. The probable cause is that, the synthetic datasets commonly have  limited data ranges, which are friendly to the FOR (Frame Of Reference) operation of ALP. We also observe that the TP dataset (characterized by a generally $\hat{\beta}_{max} > 15$) exhibits a large number of trailing zeros, which has a unique advantage for XOR-based algorithms such as Elf*.

Furthermore, {\em Falcon} demonstrates a clear advantage in compression ratio over other GPU-based compression algorithms across most datasets. Specifically, {\em Falcon} achieves average relative compression ratio improvements of 70\%, 48\%, 68\%, 59\%, and 49\% over ndzip, Lz4, Snappy, GDeflate and Bitcomp across all datasets, respectively. These results highlight the effectiveness of our GPU implementation and the superiority of {\em Falcon} in terms of compression ratio. Similarly, byte-oriented algorithms like Lz4, Snappy, and GDeflate are well-suited for the TP dataset due to its abundance of trailing zeros.

\begin{table}[t!]
  \centering
  \captionsetup{skip=5pt} %设置表头和表格之间的间距
  \caption{Compression Throughput (GB/s).}
\label{tab:e2eCompThroughput}
  \setlength{\tabcolsep}{2pt} % Slightly reduce column spacing for a better fit
  \resizebox{\columnwidth}{!}{  
  \begin{tabular}{@{}c|cccccccc|ccc@{}}
\bottomrule
\multirow{2}{*}{\textbf{\makecell[c]{Data\\Set}}} & \multicolumn{8}{c|}{\textbf{GPU}} & \multicolumn{3}{c}{\textbf{CPU}}\\ \cline{2-12} 
       & {\em Falcon}  & ndzip & Lz4  & Bitc. & GDef. & Snap. & Elf* & ALP  & Elf  & Elf*      & ALP\\ \hline

\textbf{AP}                       & \textbf{8.32} & 1.87 & 1.58 & 3.34 & \underline{3.79} & 0.10 & 0.86 & 3.06 & 0.15 & 0.09 & 0.71 \\
\textbf{CT}                       & \textbf{15.05} & 1.93 & 1.63 & 3.14 & \underline{5.31} & 0.30 & 0.87 & 5.04 & 0.14 & 0.07 & 0.76 \\
\textbf{GS}                       & \textbf{13.23} & 1.92 & 1.52 & 2.78 & 4.50 & 0.19 & 0.82 & \underline{4.64} & 0.12 & 0.06 & 0.75 \\
\textbf{JM}                       & \textbf{10.27} & 1.96 & 1.43 & 2.90 & \underline{4.21} & 0.86 & 0.74 & 3.78 & 0.11 & 0.05 & 0.68 \\
\textbf{SP}                       & \textbf{8.23} & 1.91 & 1.55 & 2.59 & \underline{4.33} & 0.07 & 0.85 & 2.89 & 0.13 & 0.07 & 0.70 \\
\textbf{SW}                       & \textbf{16.19} & 1.93 & 1.59 & 2.88 & 4.62 & 0.44 & 0.84 & \underline{4.74} & 0.12 & 0.06 & 0.74 \\
\textbf{TA}                       & \textbf{13.29} & 1.89 & 1.63 & 2.83 & \underline{4.93} & 0.22 & 0.82 & 4.74 & 0.11 & 0.06 & 0.76 \\
\textbf{TP}                       & \textbf{7.61} & 1.94 & 1.51 & 2.74 & \underline{5.06} & 0.16 & 0.76 & 3.55 & 0.13 & 0.06 & 0.68 \\
\textbf{WS}                       & \textbf{10.74} & 1.85 & 1.53 & 2.68 & \underline{4.50} & 0.10 & 0.84 & 3.96 & 0.13 & 0.07 & 0.76 \\
\textbf{NYX}                      & \textbf{4.91} & 1.91 & 1.49 & 2.29 & \underline{3.75} & 0.13 & 0.75 & 2.16 & 0.11 & 0.05 & 0.68 \\
\textbf{SM}                       & \textbf{10.42} & 1.91 & 1.31 & 2.78 & \underline{4.14} & 0.57 & 0.76 & 2.21 & 0.12 & 0.06 & 0.74 \\
\textbf{ST}                       & \textbf{11.56} & 1.91 & 1.42 & 2.77 & \underline{4.39} & 0.16 & 0.78 & 4.35 & 0.12 & 0.06 & 0.75 \\ \hline
\textbf{Avg.}                     & \textbf{10.82} & 1.91 & 1.52 & 2.81 & \underline{4.46} & 0.28 & 0.81 & 3.76 & 0.12 & 0.06 & 0.73 \\ 
\toprule
\end{tabular}
}
\end{table}

\begin{table}[t!]
  \centering
  \captionsetup{skip=5pt} %设置表头和表格之间的间距
  \caption{Decompression Throughput (GB/s).}
\label{tab:e2eDecompThroughput}
  \setlength{\tabcolsep}{2pt} % Slightly reduce column spacing for a better fit
  \resizebox{\columnwidth}{!}{  
  \begin{tabular}{@{}c|cccccccc|ccc@{}}
\bottomrule
\multirow{2}{*}{\textbf{\makecell[c]{Data\\Set}}} & \multicolumn{8}{c|}{\textbf{GPU}} & \multicolumn{3}{c}{\textbf{CPU}}                                         \\ \cline{2-12} 
       & {\em Falcon}  & ndzip & Lz4  & Bitc. & GDef. & Snap. & Elf* & ALP  & Elf  & Elf*      & ALP\\ \hline
\textbf{AP}                       & \textbf{9.36}      & 1.63      & 2.47      & 1.61 & 3.88  & 0.24 & \underline{5.63}    & 1.45     & 0.23   & 0.18 & 4.07 \\
\textbf{CT}                       & \textbf{15.95}      & 1.69      & 2.58      & 2.06 & 5.49  & 1.03 & \underline{6.27}    & 2.30     & 0.21   & 0.16 & 4.34 \\
\textbf{GS}                       & \textbf{14.36}      & 1.72      & 2.47      & 2.13 & 4.64  & 0.70 & \underline{5.55}    & 2.13     & 0.18   & 0.15 & 4.10 \\
\textbf{JM}                       & \textbf{12.82}      & 1.73      & 2.40      & 2.10 & \underline{4.42}  & 1.67 & 4.38    & 1.86     & 0.15   & 0.11 & 3.57 \\
\textbf{SP}                       & \textbf{8.81}      & 1.67      & 2.49      & 1.58 & 4.02  & 0.25 & \underline{5.11}    & 1.41     & 0.19   & 0.16 & 4.04 \\
\textbf{SW}                       & \textbf{16.94}      & 1.72      & 2.57      & 2.15 & 4.92  & 1.18 & \underline{5.79}    & 2.25     & 0.17   & 0.13 & 4.12 \\
\textbf{TA}                       & \textbf{14.25}      & 1.70      & 2.61      & 1.96 & 5.08  & 1.05 & \underline{5.20}    & 2.09     & 0.15   & 0.12 & 4.08 \\
\textbf{TP}                       & \textbf{10.86}      & 1.71      & 2.44      & 1.99 & \underline{5.37}  & 0.91 & 4.77    & 1.59     & 0.19   & 0.12 & 3.42 \\
\textbf{WS}                       & \textbf{11.63}      & 1.66      & 2.48      & 1.84 & 4.43  & 0.34 & \underline{5.57}    & 1.59     & 0.18   & 0.14 & 4.22 \\
\textbf{NYX}                       & \textbf{7.03}      & 1.65      & 2.14      & 1.84 & 3.30  & 0.37 & \underline{3.78}    & 0.91     & 0.15   & 0.11 & 3.68 \\
\textbf{SM}                       & \textbf{13.01}      & 1.71      & 2.26      & 2.15 & 4.24  & 1.24 & \underline{4.64}    & 1.39     & 0.17   & 0.13 & 3.76 \\ 
\textbf{ST}                       & \textbf{12.86}      & 1.71      & 2.39      & 1.91 & 4.45  & 0.65 & \underline{4.86}    & 1.84     & 0.17   & 0.14 & 4.04 \\ \hline
\textbf{Avg.}                     & \textbf{12.32}      & 1.69      & 2.44      & 1.94 & 4.52  & 0.80 & \underline{5.13}    & 1.73     & 0.18   & 0.14 & 3.95 \\ 
\toprule
\end{tabular}
}
\end{table}

\subsubsection{Compression and Decompression Throughput} \label{subsec:e2ethrought}
Based on the throughput results presented in Table~\ref{tab:e2eCompThroughput} and Table~\ref{tab:e2eDecompThroughput}, we have the following observations.

(1) {\em Falcon} demonstrates significantly higher throughput in both compression and decompression compared to other mainstream GPU-based compression algorithms. During compression, the throughput of {\em Falcon} is consistently 2.4$\times$ to 7.1$\times$ faster than competitors like ndzip and the nvCOMP suite. During decompression, this advantage is even more pronounced, with {\em Falcon} achieving throughput up to 2.7$\times$  faster than the best-performing nvCOMP algorithm (i.e., GDeflate), and 7.3$\times$ faster than ndzip.

(2) When compared against state-of-the-art CPU-based algorithms and their GPU-ported versions, {\em Falcon} maintains a clear performance advantage. During compression, {\em Falcon}'s throughput far exceeds that of the CPU-based algorithms, with its average throughput being $87\times$, $15\times$, $169\times$ higher than Elf, Elf* and SIMD-accelerated ALP, respectively. Moreover, compared to our GPU-ported versions, {\em Falcon}'s compression throughput is $13\times$ higher than GPU:Elf* and $2.9\times$ higher than GPU:ALP. This performance superiority extends to the decompression process, where {\em Falcon} demonstrates a similar advantage. Notably, its decompression throughput is $2.4\times$ higher than its most competitive counterpart GPU:Elf*.

% Throughput is where Falcon's architectural design truly shines. As presented in Table~\ref{tab:e2eCompThroughput}, our method shows a commanding performance lead in compression, achieving an average end-to-end throughput of \textbf{xxx GB/s}. This represents a xxx$\times$ speedup over the next fastest GPU method, gdf (xxx GB/s), and a staggering xxx$\times$ speedup over the high-ratio CPU method ALP (xxx GB/s).

% This trend continues in decompression performance (Table~\ref{tab:e2eDecompThroughput}), where Falcon again leads decisively with an average throughput of \textbf{10.69 GB/s}. This powerful combination of high-speed compression and decompression confirms that Falcon is exceptionally well-suited for modern, large-volume data analytics and HPC workflows where processing time, not just storage size, is the critical limiting factor.

\subsection{Parameter Study}

\begin{figure}
  \centering
  \includegraphics[width=3.5in]{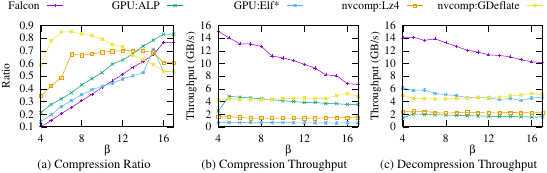}
  \caption{Performance with Different Decimal Significands.}
  \label{fig:6_2Beta}
\end{figure}

\subsubsection{Performance with Different $\beta$}
We observe a strong correlation between the performance of {\em Falcon} and the decimal significand values $\beta$ of the dataset. To investigate this relationship, we conduct a set of experiments on the TP dataset by gradually reducing its decimal significand through string truncation in advance. We then compare {\em Falcon}'s performance against four top-performing algorithms in Section~\ref{subsec:overallcmp}.
The results show that {\em Falcon}'s compression ratio improves as the $\beta$ value increases in Fig.~\ref{fig:6_2Beta}(a). For data with small $\beta$ values, the advantage of our conversion scheme is less pronounced. This is because the binary representation of TP data inherently contains a large number of trailing zeros, which diminishes the relative gains from our method. As illustrated in Fig.~\ref{fig:6_2Beta}(b) and Fig.~\ref{fig:6_2Beta}(c), while {\em Falcon}'s compression and decompression throughputs decrease with larger $\beta$ values, it consistently remains the highest-performing solution among all competitors.

\begin{figure}
  \centering
  \includegraphics[width=\linewidth]{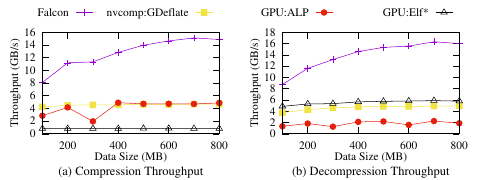}
  \caption{Performance with Different Data Size.}
  \label{fig:6_4_throughput_comparison}
\end{figure}

% 我们现在分析吞吐量如何随数据量增长。图~\ref{fig:6_4_throughput_comparison}绘制了在数据集SW上端到端吞吐量随数据量增长的变化。
\subsubsection{Performance with Different Data Sizes}\label{subsec:Scalability_datasize}
We analyze the variation of throughput with increasing data size, as shown in Fig.~\ref{fig:6_4_throughput_comparison}, based on the SW dataset. We do not present the compression ratio, since the compression ratios of all methods keep stable under different total data sizes and the same chunk size. Here we select three competitors with the best throughput for comparison.

The throughput of all competitors saturates quickly. For instance, the throughput of GDeflate stagnates around 4~GB/s, regardless of the data size. This confirms that these methods are architecturally limited and cannot benefit from larger workloads.

In contrast, {\em Falcon} exhibits strong scalability. On the same dataset, as the data size increases from 100 MB to 700 MB, its compression throughput rises from approximately 8~GB/s to over 15~GB/s, representing a growth of more than 1.8$\times$. This indicates that {\em Falcon}'s performance advantage becomes particularly pronounced at larger data scales: its asynchronous pipeline effectively hides latency, making the benefits more significant with an increasing data size.

\begin{table}[t!]
  \centering
  \small
  \caption{Performance with Different Batch Sizes.}
  \label{tab:batchSize}
  \setlength{\aboverulesep}{0pt}
  \setlength{\belowrulesep}{0pt}
  \setlength{\extrarowheight}{0pt}
  \begin{tabular*}{\columnwidth}{@{\extracolsep{\fill}}c|ccccc@{}}
    \toprule
    \textbf{Size($\times$1025$\times$1024)} & 0.5 & 1 & 2 & 4 & 8 \\ 
    \midrule
    \textbf{CT (GB/s)} & 7.26 & 9.37 & 10.52 & \textbf{11.30} & \underline{11.16} \\
    \textbf{DT (GB/s)} & 11.37	& 11.58	& \underline{11.59}	& \textbf{11.61}	& 10.81 \\

    \bottomrule
  \end{tabular*}
\end{table}

\subsubsection{Performance with Different Batch Sizes}
To investigate the impact of batch size on overall performance, we conduct a set of experiments to evaluate its effect on the throughput of {\em Falcon}. The results, presented in Table~\ref{tab:batchSize}, show a clear trend: both the compress throughput (CT) and the decompress throughput (DT) of {\em Falcon} increase as the batch size grows, reaching its peak at a size of 4×1025×1024. This behavior can be explained by a trade-off. A small batch size leads to the under utilization of the GPU's parallel processing capabilities. Conversely, an excessively large batch size can cause significant imbalances in the execution time across different pipeline stages, which in turn degrades throughput. The compression ratio has nothing to do with the batch size given a fixed chunk size, so we do not present the compression ratio here.

\subsection{Ablation Study}
%In this section, we conduct a series of in-depth analyses and ablation studies to dissect the framework and quantify the contribution of each core component.

\subsubsection{Analysis of Pipelining Architecture}
Fig.~\ref{fig:6_3_2Ablation} (a) reports the average compression throughput of the three schedulers introduced in Fig.~\ref{fig:DifferentSchedulers} under varying numbers of streams on all datasets.

As the number of streams increases, the compression throughput of both Pre-Allocation Scheduler and Event-Driven Scheduler gradually improves, plateauing after reaching 16 streams. In contrast, Sync-Based Scheduler maintains a constant Compression throughput regardless of stream number. This behavior occurs because a higher stream number enables Pre-Allocation Scheduler and Event-Driven Scheduler to process more streams concurrently. When the number of streams reaches 16, the system's processing capacity becomes saturated. However, Sync-Based Scheduler serializes stream execution due to its synchronous operations, which hinders parallel processing and limits throughput scalability.

Event-Driven Scheduler consistently achieves the highest compression throughput in multi-stream configurations. When there are 16 streams, it achieves over 2$\times$ improvement compared to the single-stream case and outperforms Pre-Allocation Scheduler and Sync-Based Scheduler by 2.6$\times$ and 2$\times$, respectively, demonstrating the effectiveness of our proposed approach. Pre-Allocation Scheduler manifests the lowest compression throughput in all numbers of streams, because it transfers many useless data from GPU to CPU, and performs an extra merging step.

\begin{figure}
  \centering
  \includegraphics[width=3.5in]{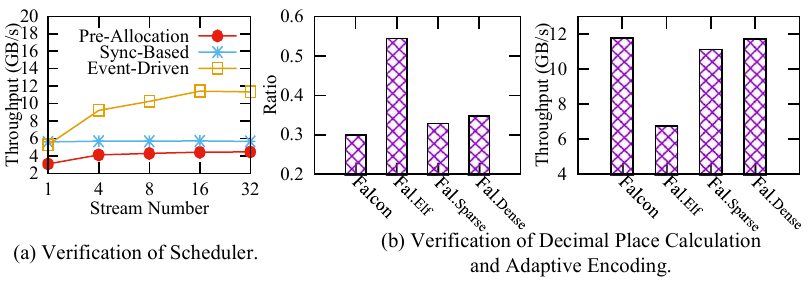}
  \caption{Ablation Study Results.}
  \label{fig:6_3_2Ablation}
\end{figure}

\subsubsection{Analysis of Algorithmic Components}
To further evaluate the contributions of our core algorithmic innovations, we conduct another ablation study by comparing the performance of {\em Falcon} against three variants with key optimizations disabled. Specifically, we compared Falcon with: \textbf{{\em Fal.}\textsubscript{Elf}}, which adopts the imprecise decimal place calculation scheme from Elf; \textbf{{\em Fal.}\textsubscript{Sparse}}, which treats all bit-planes as sparse; and \textbf{{\em Fal.}\textsubscript{Dense}}, which treats all bit-planes as dense. We do not compare the string-analysis decimal place calculation method, because in our CUDA settings, it does not support converting a floating-point value into a string.

The results, presented in Fig.~\ref{fig:6_3_2Ablation}(b), demonstrate that both of our core optimizations are critical for performance. First, {\em Falcon} achieves substantially better compression ratio and throughput than {\em Fal.}\textsubscript{Elf}. This demonstrates the effectiveness of our accurately decimal place calculations, leading to better compression. Second, our adaptive bit-plane encoding proves superior to both static strategies of {\em Fal.}\textsubscript{Sparse} and {\em Fal.}\textsubscript{Dense}. Although the adaptive approach incurs additional overhead, its ability to achieve a better compression ratio drastically reduces data transmission, leading to an overall throughput improvement compared to both static strategies.

% \subsubsection{Tuning for Concurrency and Granularity}
% We tune Falcon's internal parallelism by varying the number of streams and the chunk size on the 381~MB WS dataset. Figure~\ref{fig:6_4_2_Num_Stream} plots the end-to-end throughput for different configurations.

% The results show two clear trends. First, for a given chunk size, throughput increases with more streams up to a point of saturation. Second, the peak performance of over 12~GB/s is achieved with 16 streams processing 32~MB chunks. Using more than 16 streams did not yield further improvement on our testbed, indicating that this configuration fully utilizes the available GPU resources while respecting memory constraints. This provides a clear optimal configuration for maximizing performance on large workloads.

\subsection{Performance for Single Values}
{\em Falcon} can be easily extended to single-precision floating-point data. Table~\ref{tab:floating} presents the results of our experiments on datasets with $\beta < 8$. We can see that {\em Falcon} achieves superior compression ratio and throughput. Specifically, {\em Falcon} outperforms its best competitor, Elf*, by an average relative compression ratio improvement of 31\%. Furthermore, {\em Falcon}'s compression throughput is significantly higher than the best-performing alternative, GDeflate, achieving an average speedup of $2.7\times$. The improved decompression throughput further demonstrates {\em Falcon}'s advantage, with an average increase of $2.7\times$ compared to GDeflate. These results highlight {\em Falcon}'s effectiveness in handling single-precision data.

\begin{table}[t!]
  \centering
  \captionsetup{skip=5pt} %设置表头和表格之间的间距
  \caption{Performance for Single Values.}
\label{tab:floating}
  \setlength{\tabcolsep}{2pt} % Slightly reduce column spacing for a better fit
  \resizebox{\columnwidth}{!}{  
  \begin{tabular}{@{}c|cccccccc|ccc@{}}
\bottomrule
\multirow{2}{*}{\textbf{\makecell[c]{Metrics}}} & \multicolumn{8}{c|}{\textbf{GPU}} & \multicolumn{3}{c}{\textbf{CPU}}                                         \\ \cline{2-12} 
       & {\em Falcon}  & ndzip & Lz4  & Bitc. & GDef. & Snap. & Elf* & ALP  & Elf  & Elf*      & ALP\\ \hline
\textbf{CR}                       & \textbf{0.279}      & 0.982      & 0.740      & 0.984 & 0.713  & 0.620 & \underline{0.404}    & 0.484     & 0.461   & 0.413 & 0.454 \\
\textbf{CT(GB/s)}                       &\textbf{11.86}       & 1.71      & 1.34      & 2.43 & \underline{4.38}  & 0.07 & 0.45    & 3.61     & 0.06   & 0.03 & 0.23 \\
\textbf{DT(GB/s)}                       &\textbf{11.57}       & 1.40      & 2.11      & 1.72 & \underline{4.21}  & 0.38 & 4.17    & 1.44     & 0.09   & 0.07 & 1.63 \\
\toprule
\end{tabular}
}
\end{table}

\section{related works}\label{sec:relatedwork}

\subsection{CPU-Based General Lossless Compression}
Traditional general-purpose lossless compression methods can also be applied to compress floating-point data. Common algorithms include Lz77~\cite{LZ77} and Huffman coding, which reduce data redundancy through sliding windows and optimal encoding methods, respectively. Deflate~\cite{deutsch1996deflate} is a widely used compression method based on Lz77 and Huffman coding, offering a moderate balance between compression efficiency and speed. Bzip2~\cite{seward1996bzip2}, on the other hand, uses Burrows-Wheeler Transform (BWT) and Run-Length Encoding (RLE), achieving higher compression ratios than Gzip but at a slower speed, making it more suitable for scenarios that require higher compression ratios.

Zstandard (ZSTD)~\cite{collet2016zstd} is a newer general-purpose compression algorithm that combines Lz77 with entropy coding, providing both high compression ratios and speed, and supports adjustable compression levels. Snappy~\cite{snappy} adopts a simple sliding window algorithm focused on compression speed and low latency, making it suitable for high-performance computing and real-time applications, though its compression ratio is lower than that of ZSTD. Additionally, Lz4~\cite{LZ4} is another fast compression algorithm focused on speed, especially suitable for real-time applications that require low latency and high throughput. 

Although these general algorithms perform well in many scenarios, they are not optimized for floating-point data, often resulting in limited compression ratios. This has led to the development of specialized compression algorithms for floating-point data.

\subsection{CPU-Based Floating-Point Lossless Compression}
Several specialized compression methods have been proposed in response to the characteristics of floating-point data, such as continuity and local correlation. The classic FPzip~\cite{lindstrom2006fast} algorithm leverages the prediction of residuals in floating-point data, followed by Huffman encoding for compression. ndzip~\cite{knorr2021ndzip} similarly uses differential calculation followed by bitmap encoding. Both PDE~\cite{kuschewski2023btrblocks} and ALP~\cite{afroozeh2023alp} employ the concept of Pseudo-Decimals, converting floating-point data into integers for further encoding and compression. The algorithms also encounter challenges due to computational errors inherent in floating-point data.

Gorilla~\cite{pelkonen2015gorilla} compresses adjacent floating-point values by XORing them with the previous value. Chimp and Chimp$_{128}$~\cite{liakos2022chimp} optimize Gorilla. Chimp$_{128}$ selects the best-matching historical value from the previous 128 values for XOR, increasing the number of trailing zeros in the XOR result. Elf~\cite{li2023elf} and Elf*~\cite{li2025adaptive} introduce an “erasing” strategy that removes selected bits to further increase trailing zeros, with Elf* combining adaptive Huffman coding and dynamic programming for improved compression. Its streaming version, SElf*~\cite{li2025adaptive}, further enhances efficiency in real-time processing and edge computing scenarios. Despite their effectiveness, the complexity of these encoding schemes limits parallelism.

% Gorilla~\cite{pelkonen2015gorilla} compresses adjacent floating-point values by XORing them with the previous value. Chimp and Chimp$_{128}$~\cite{liakos2022chimp} optimize Gorilla, particularly in handling trailing zeros. Chimp$_{128}$ selects the best-matching historical value from the previous 128 values for XOR, increasing the number of trailing zeros in the XOR result and improving the compression ratio. Additionally, {\em Elf}~\cite{li2023elf} and {\em Elf}*~\cite{li2025adaptive} algorithms introduce an innovative "erasing" strategy, removing certain bits from the floating-point values to increase the number of trailing zeros in the XOR result, thus enhancing compression efficiency. {\em Elf}* employs adaptive Huffman coding and dynamic programming to optimize the handling of trailing zeros, significantly improving the compression ratio. Its streaming version, {\em SElf}*~\cite{li2025adaptive}, further enhances efficiency in real-time processing and edge computing scenarios. A major limitation of these algorithms is their low parallelism, caused by the complexity of their encoding schemes.

\subsection{GPU-Based Floating-Point Lossless Compression}
With the advancement of GPU hardware, an increasing number of studies have focused on parallelizing the floating-point compression process on GPU platforms to overcome performance bottlenecks in large-scale data compression~\cite{hishida2025beyond}.

% nvCOMP~\cite{nvcomp2023} is a compression library developed by NVIDIA that includes various GPU compression algorithms. nvCOMP::Lz4~\cite{nvcomp2023}  is used for general compression tasks, while nvCOMP::bitcomp~\cite{nvcomp2023} improves floating-point data compression by combining dictionary compression and predictive coding. The advantage of nvCOMP lies in its high throughput and parallelism, which fully utilizes the CUDA architecture. However, its major drawback is that it is highly dependent on the hardware; on lower-end GPUs or non-NVIDIA hardware, it may not perform optimally.

nvCOMP~\cite{nvcomp2023} is a compression library developed by NVIDIA that provides a collection of GPU-based compression algorithms.  It mainly adapts conventional and widely used compression schemes, such as LZ4, Deflate and Snappy, to run on GPUs. nvCOMP's advantages lie in its high throughput and parallelism, fully leveraging the CUDA architecture. However, there are still issues with compression ratio and throughput for general-purpose compression. In addition, nvCOMP also includes Bitcomp, a proprietary compressor designed for floating-point data.

GPU-based Floating-Point Compressor (GFC)~\cite{o2011floating} is a GPU-based algorithm that compresses double-precision data using residuals, but its fixed-size blocks limit input data to 512 MB. Massive Parallel Compression (MPC)~\cite{yang2015mpc} combines delta compression and data transformation. The method relies on large block sizes, which may lead to performance degradation in highly fluctuating data. ndzip-GPU~\cite{9910073} is the GPU version of the ndzip algorithm. However, ndzip is designed for high-dimensional floating-point data and is less effective for compressing low-dimensional data. Typed Data Transformation (TDT)~\cite{jamalidinan2025floating} is a data pre-processing method that rearranges bytes within floating-point numbers to enhance the efficiency of subsequent general-purpose compressors.

In summary, GPU-based floating-point data compression algorithms significantly improve compression and decompression efficiency through parallelized processing. However, they also face challenges such as reduced compression ratios and low throughput.

\section{Conclusion and Future Work}\label{sec:conclusion}
%本文介绍了 \textbf{Falcon}，一种基于 GPU 的新型自适应无损浮点压缩框架。在 12 个不同数据集上进行的大量实验表明，Falcon 平均压缩率可以达到0.299，平均压缩和解压吞吐量可以达到10GB/s和12GB/s，与相较于最好的竞品，压缩率有着9%的提高，此外比 GPU 竞品实现了压缩吞吐量提高到2.4到39倍，解压吞吐量也有2.4-15倍的提高。

In this paper, we introduce {\em{Falcon}}, a novel GPU-based adaptive lossless floating-point compression framework.  Extensive experiments on 12 different datasets demonstrate that {\em Falcon} achieves an average compression ratio of 0.299, with average compression and decompression throughputs of 10.82 GB/s and 12.32 GB/s, respectively. Compared to the best-performing competitor, {\em Falcon} provides a 9.1\% relative improvement in compression ratio. Furthermore, it delivers a $2.4\times$ to $39\times$ speedup in compression throughput and a $2.7\times$ to $15\times$ speedup in decompression throughput over competing GPU-based solutions. In our future work, we will explore direct analytics on {\em Falcon}'s compressed data without full decompression.
%\clearpage

\bibliographystyle{ACM-Reference-Format}
\bibliography{sample}

\end{document}